\tikzset{actor/.style={align=center, draw, rounded corners=.20cm}}
\tikzset{setback/.style={fill=gray!50}}
\tikzset{emphasize/.style={line width=2pt,font=\bfseries}}
\tikzset{virtualactor/.style={align=center, draw, ellipse}}
\tikzset{tapeactor/.style={tape,align=center,draw}}
\tikzset{decider/.style={align=center, draw, regular polygon, regular polygon sides=6}}
\tikzset{arc/.style={thin,-{Stealth}}}
\tikzset{arccontrol/.style={thin,-{Stealth},dashed}}
\def\BibTeX{{\rm B\kern-.05em{\sc i\kern-.025em b}\kern-.08emT\kern-.1667em\lower.7ex\hbox{E}\kern-.125emX}}
\newcommand{\myalert}[2]{\ifthenelse{\boolean{disablecomment}}{#1}{\alert{#1} \todo{\small #2}}}
\newtheorem{theorem}{Theorem}
\newtheorem{definition}{Definition}
\newtheorem{property}{Property}
\begin{document}

\title{Real-time Mode-Aware Dataflow: A Dataflow Model to Specify and Analyze Mode-dependent CPSs under Relaxed Timing Constraints}

\author{\IEEEauthorblockN{Guillaume Roumage$^\dagger$, Selma Azaiez$^\ddagger$, Cyril Faure$^\ddagger$, Stéphane Louise$^\ddagger$} \IEEEauthorblockA{$^\dagger$guillaume.roumage.research@proton.me \\ $^\ddagger$\textit{Université Paris-Saclay, CEA, List, F-91120, Palaiseau, France} \\ $^\ddagger$firstname.lastname@cea.fr}}

\maketitle

\begin{abstract}
  Modern Cyber-Physical Systems (CPS) often exhibit both relaxed real-time constraints and a mode-dependent execution. Relaxed real-time constraints mean that only a subset of the processes of a CPS have real-time constraints, and a mode-dependent CPS has conditional execution branches. Static analysis tools, such as the PolyGraph model (a formalism extending the Cyclo-Static Dataflow model with real-time constraints), can specify and analyze systems with relaxed real-time constraints. However, PolyGraph is limited in its ability to specify and analyze mode-dependent CPSs. This paper extends PolyGraph with routing actors, yielding the Routed PolyGraph model. This model is further extended to the Real-time Mode-Aware Dataflow (RMDF), which both leverages routing actors and incorporates a new dataflow actor to specify mode-dependent CPSs under relaxed real-time constraints. This paper also extends the static analyses of PolyGraph to RMDF. We showcase the application of RMDF with a specification and an analysis (derivation of timing constraints at the job-level and a feasibility test) of the vision processing system of the Ingenuity Mars helicopter.
\end{abstract}

\begin{IEEEkeywords}
    Dataflow Model, Real-time Systems, Mode-Dependent Execution, Timing Analysis
\end{IEEEkeywords}

\section{Introduction}

Cyber-Physical Systems (CPSs) are reactive systems that detect environmental shifts through sensors, process this information using computational processes, and then use the output to control actuators. CPSs range from digital signal processing systems to embedded/cloud infrastructures, soft/hard real-time systems, and even a mix of all the above. These complex systems must operate reliably without threatening their internal processes. For example, a failure of the actuator in an autonomous car can lead to catastrophic consequences such as a car crash or a pedestrian accident.

An increasing number of systems exhibit dynamic behavior, allowing them to switch between various modes of operation at runtime, e.g., a switch between the nominal and the degraded execution mode. The procedure that initiates these mode switches may rely on runtime data, limiting the static analyzability of the system. This paper presents an extension of an existing static analysis tool to specify and analyze such systems with dynamic behavior.

The design tools of interest in this paper are Dataflow Models of Computation and Communication (DF MoCC). These tools can formally prove the correctness of some classes of CPSs regarding safety and temporal properties, such as consistency, liveness, latency, and throughput. This paper presents a DF MoCC called Real-time Mode-aware Dataflow (RMDF), which extends the \emph{routed PolyGraph model}, this latter being itself an extension of the DF MoCC PolyGraph~\cite{dubrulle_data_2019}. While PolyGraph is tailored to specify and analyze CPSs with \emph{relaxed} real-time constraints, that is, a CPS with real-time constraints on only a subset of its processes, it cannot specify and analyze \emph{mode-dependent} CPSs.

\subsubsection{Contribution}

We have developed an extension of PolyGraph, called RMDF, to specify and analyze \emph{mode-dependent} CPSs under relaxed real-time constraints. As a dataflow model is a trade-off between its expressiveness and its analyzability~\cite{roumage_survey_2022}, the challenge here is to enhance the former without reducing the latter. Within the scope of this paper, a \emph{mode-dependent} CPS has a set of conditional execution branches. The consistency analysis (checking the existence of a memory-bounded execution) and the liveness analysis (checking the existence of a deadlock-free execution) of PolyGraph are extended to RMDF. In addition, the timing analysis described in~\cite{roumage_static_2024} for PolyGraph is extended to RMDF, and a feasibility test is proposed.

\subsubsection{Paper organization}

The paper starts by presenting the background and the terminology of the PolyGraph model in \cref{sec:background}. The PolyGraph model is extended to the \emph{routed PolyGraph model} in \cref{sec:routed-polygraph}, which is further extended to the RMDF model in \cref{sec:rmdf}. The extension of the static analysis of PolyGraph to RMDF is detailed in \cref{sec:static-analysis}. \cref{sec:timing-analysis} presents the timing analysis of an RMDF specification. This timing analysis permits a feasibility test presented in \cref{sec:feasibility-test}. \cref{sec:applications} applies the RMDF model to specify and analyze a dataflow specification of the Ingenuity vision processing system. \cref{sec:mode-change-protocol} discusses the mode change protocol of RMDF. Finally, \cref{sec:related-works} presents discussions and related works, and \cref{sec:conclusion-future-works} concludes the paper.

\section{Background and Terminology}

\label{sec:background}

RMDF is an extension of PolyGraph~\cite{dubrulle_polygraph_2021}, which is itself a superset of the Synchronous Dataflow (SDF) model~\cite{lee_synchronous_1987}. In this section, we briefly present SDF, and we explain how the PolyGraph model expands it. We also introduce \emph{disjoint directed meshes} to explain further how RMDF extends PolyGraph.

\subsection{The SDF Model}

An SDF specification is an oriented graph $G = (V, E)$ where $V$ is a finite set of \emph{actors} and $E$ is a finite set of \emph{channels}. An SDF specification is characterized by a \emph{topology matrix} $G_\Gamma$ of size $|E| \times |V|$. An \emph{actor} is a computational unit that both produces and consumes data tokens every time it is executed. The atomic amount of data exchanged is known as a \emph{token}. The entry $(i, j)$ of $G_\Gamma$ is the number of tokens produced or consumed by the actor $v_j$ on the channel $c_i$ each time the actor $v_j$ is executed (this number is positive if the tokens are produced, and negative otherwise). An execution of an actor is called a \emph{job}.

An actor $v \in V$ has (optional) input and output ports connected to input and output channels of $E$. A channel $c_i = (v_j, v_k, n_{ij}, n_{ik}, init_i) \in E$ connects an output port of the actor $v_j \in V$ to an input port of the actor $v_k \in V$. This channel also has a production rate $n_{ij} \in \mathbb{N}^*$ (which is the entry $(i, j)$ of $G_\Gamma$), a consumption rate $n_{ik} \in \mathbb{N}^*$ (the entry $(i, k)$ of $G_\Gamma$), and a number of initial tokens $init_i \in \mathbb{N}$. An actor's job produces/consumes tokens to/from the buffer of its output/input channels according to the production/consumption rate. For the sake of simplicity in the rest of this paper, we denote $[c_i]$ the number of initial tokens of the channel $c_i$.

\subsection{The PolyGraph Model}

\subsubsection{Syntax and semantics}

The PolyGraph model is a superset of the SDF model. It expands it in two ways: the production and consumption rates become periodic sequences (as in CSDF~\cite{bilsen_cyclostatic_1996}), and a subset of actors may have timing constraints. The terminology of the static analysis of an SDF specification, i.e., consistency, liveness, consistent and live execution, and hyperperiod~\cite{lee_synchronous_1987}, is extended to a PolyGraph specification. Consistency ensures the system can be executed within a bounded memory, while liveness guarantees deadlock-free execution. A hyperperiod is a partially ordered set of actors' jobs that returns the system to its initial state within a given time frame.

\subsubsection{Rational production and consumption rates}

Let $G = (V, E)$ be a PolyGraph specification and let $G_\Gamma = (\gamma_{ij}) \in \mathbb{Q}^{|E| \times |V|}$ be its topology matrix. The production and consumption rates are rational\footnote{A channel of a PolyGraph specification as defined in~\cite{dubrulle_polygraph_2021} has at least one integer rate. However, for the sake of simplicity, we consider in this paper that both rates are rational.}: a channel $c_i \in E$ is a tuple $(v_j, v_k, \gamma_{ij}, \gamma_{ik}, [c_i])$ such that $\gamma_{ij} \in \mathbb{Q}^*$ and $\gamma_{ik} \in \mathbb{Q}^*$, and $[c_i] \in \mathbb{Q}$ ($\mathbb{Q}^* = \mathbb{Q} \setminus \{ 0 \}$). Rational rates imply that the number of tokens produced/consumed can differ for each job. Specifically, a token is produced and stored in a channel if and only if sufficient fractional token parts are symbolically produced. Indeed, only an integer number of tokens may be produced or consumed. In other words, an actor with a rational production rate such as $1/n$ in the dataflow specification actually produces data once every $n$ jobs during runtime, thus validating the precedence constraint toward the consumer once every $n$ of its jobs.

The rational production and consumption rates are natural consequences of the periodicity of real-time actors (with specified and imposed frequencies) together with the consistency of the system's model imposed by the topology matrix. The rational rates are a compact notation for a CSDF equivalency~\cite{bilsen_cyclostatic_1996}.

\subsubsection{Timing constraints}

The actors of a PolyGraph specification may have a frequency constraint. Furthermore, they may also have a phase if they have a frequency constraint. The phase usually models the system's end-to-end latency. The frequency dictates the actor's execution at the specified rate, while the phase postpones its initial execution. Actors with frequency constraints are referred to as \emph{timed actors}. An execution of a consistent and live PolyGraph specification is an infinite repetition of its first hyperperiod. Although the number of jobs is not bounded during an execution, their timing constraints are cyclic over a hyperperiod.

\subsection{Disjoint Directed Meshes}

Let $G = (V, E)$ be a directed graph. Let us define two maps $in, out : V \to 2^E$ that associate to each actor $v \in V$ the set\footnote{$2^E$ is the powerset of $E$, that the set of all subset of $E$. For instance, if $E = \{ c_1, c_2 \}$, then $2^E = \{ \{ \emptyset \}, \{ c_1 \}, \{ c_2 \}, \{ c_1, c_2 \} \}$} of input and output channels, respectively. $G$ is a \emph{disjoint directed mesh} if:

\begin{enumerate}
  \item there is a single source actor, i.e., there exists a unique actor $v_s \in V$ such that $in(v_s) = \emptyset$;
  \item there is a single sink actor, i.e., there exists a unique actor $v_t \in V$ such that $out(v_t) = \emptyset$;
  \item there exists at least one path from $v_s$ to $v_t$, and a path is called a \emph{branch};
  \item the intersection of all the branches is the empty set.
\end{enumerate}

A disjoint directed mesh is illustrated in \cref{fig:disjoint-directed-mesh}.

\begin{figure}[htbp]
  \centering
  \resizebox{\columnwidth}{!}{
    \begin{tikzpicture}
      \node[actor] (src) at (0,0) {$source$};
      \node[actor] (a1) at (2,1) {$A_1$};
      \node[actor] (b1) at (2,0) {$B_1$};
      \node[actor] (c1) at (2,-1) {$C_1$};
      \node (adots) at (4,1) {$\dots$};
      \node (bdots) at (4,0) {$\dots$};
      \node (cdots) at (4,-1) {$\dots$};
      \node[actor] (a3) at (6,1) {$A_{n_1}$};
      \node[actor] (b3) at (6,0) {$B_{n_2}$};
      \node[actor] (c3) at (6,-1) {$C_{n_3}$};
      \node[actor] (sink) at (8,0) {$sink$};
      \draw[arc] (src) to (a1);
      \draw[arc] (src) to (b1);
      \draw[arc] (src) to (c1);
      \draw[arc] (a1) to (adots);
      \draw[arc] (b1) to (bdots);
      \draw[arc] (c1) to (cdots);
      \draw[arc] (adots) to (a3);
      \draw[arc] (bdots) to (b3);
      \draw[arc] (cdots) to (c3);
      \draw[arc] (a3) to (sink);
      \draw[arc] (b3) to (sink);
      \draw[arc] (c3) to (sink);
    \end{tikzpicture}
  }
  \caption{An illustration of a disjoint directed mesh with three branches. $n_1$ to $n_3$ are the (bounded positive) number of actors in each branch.}
  \label{fig:disjoint-directed-mesh}
\end{figure}
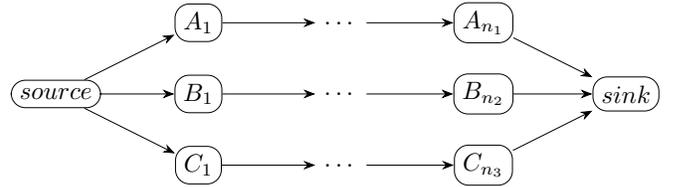

\section{The Routed PolyGraph Model: Syntax and Semantics}

\label{sec:routed-polygraph}

The RMDF model permits the specification of CPSs with conditional execution branches. This feature requires routing capabilities -specific actors to explicitly define how data tokens are routed into the conditional execution branches. We will introduce further in the paper such actors, called \emph{controlled splitters} and \emph{controlled joiners}. Controlled splitters and controlled joiners rely on actors called \emph{splitters}, \emph{joiners}, \emph{duplicaters}, and \emph{discard}~\cite{deoliveiracastro_reducing_2010,louise_graph_2019} . Those actors are not yet integrated into the PolyGraph model (only an implicit reference to them is used in~\cite{dubrulle_data_2019}). This section solves this issue by integrating these actors into the PolyGraph model, resulting in the \emph{routed PolyGraph model}.

\subsection{Routing Actors: Joiners, Splitter, Duplicaters, and Discards}

The actors \emph{Splitter}, \emph{Joiner}, \emph{Duplicater} and \emph{Discard} were introduced in~\cite{deoliveiracastro_reducing_2010}. These actors are routing actors. Instead of engaging in computational tasks, they primarily render data communication patterns explicit. This explicitness allows for compile-time optimizations~\cite{deoliveiracastro_reducing_2010}, especially reducing the memory requirements and improving overall CPS performance.

A \emph{splitter} consumes tokens from a single input and distributes them to multiple outputs. A predefined number of tokens is sent to the first channel in the lexicographic order of the output channels, then to the second channel, and so on. The number of tokens sent to an output channel is the numerator of the production rate of that channel. We enforce that an execution of a splitter consumes one token at a time; the reason behind this enforcement will be detailed in the next paragraph. Hence, the splitter of \cref{fig:splitter} sends two tokens to $c_1$ in two jobs and one token to $c_2$ in one job. The process is repeated cyclically. Similarly, a \emph{joiner} consumes tokens from multiple inputs and produces them to a single output. As for the splitter, the execution of a joiner produces one token at a time. Thus, the joiner of \cref{fig:joiner} receives two tokens from $c_1$ in two jobs and one token from $c_2$ in one job. The process is repeated cyclically. A \emph{duplicater}, as illustrated in \cref{fig:duplicater}, consumes tokens from a single input and sends them to multiple outputs. A \emph{discard} (cf. \cref{fig:discard}) is essentially a sink actor that consumes tokens from a single input without producing any. It is used to discard tokens that are no longer needed and is typically employed at the output of a splitter.

\begin{figure}[htbp]
  \centering
  \begin{subfigure}{\columnwidth}
    \centering
    \caption{Splitter.}
    \label{fig:splitter}
    \resizebox{\columnwidth}{!}{
      \begin{tikzpicture}
        \node[virtualactor] (spl) at (0,0) {Splitter};
        \draw[arc] (-2,0) -- node[pos=0,left] {$( 1, 2, 3, 4, 5, 6, \dots )$} (spl);
        \draw[arc] (spl) -- node[pos=0.2,sloped,above] {$2/3$} node[pos=0.5,sloped,below] {$c_1$} node[pos=1,right] {$( 1, 2, 4, 5, \dots )$} (3,0.8);
        \draw[arc] (spl) -- node[pos=0.2,sloped,below] {$1/3$} node[pos=0.5,sloped,above] {$c_2$} node[pos=1,right] {$( 3, 6, \dots )$} (3,-0.8);
      \end{tikzpicture}
    }
  \end{subfigure}
  \begin{subfigure}{\columnwidth}
    \centering
    \caption{Joiner.}
    \label{fig:joiner}
    \resizebox{\columnwidth}{!}{
      \begin{tikzpicture}
        \node[virtualactor] (joi) at (0,0) {Joiner};
        \draw[arc] (joi) -- node[pos=1,right] {$( 1, 2, 3, 4, 5, 6, \dots )$} (2,0);
        \draw[arc] (-3,0.8) -- node[pos=0,left] {$( 1, 2, 4, 5, \dots )$} node[pos=0.5,sloped,below] {$c_1$} node[pos=0.8,sloped,above] {$2/3$} (joi);
        \draw[arc] (-3,-0.8) -- node[pos=0,left] {$( 3, 6, \dots )$} node[pos=0.5,sloped,above] {$c_2$} node[pos=0.8,sloped,below] {$1/3$}  (joi);
      \end{tikzpicture}
    }
  \end{subfigure}
  \begin{subfigure}{\columnwidth}
    \centering
    \caption{Duplicater.}
    \label{fig:duplicater}
    \resizebox{\columnwidth}{!}{
      \begin{tikzpicture}
        \node[virtualactor] (dup) at (0,0) {Duplicater};
        \draw[arc] (-2,0) -- node[pos=0,left] {$( 1, 2, 3, \dots )$} (dup);
        \draw[arc] (dup) -- node[pos=0.5,sloped,below] {$c_1$} node[pos=1,right] {$( 1, 2, 3, \dots )$} (3,0.8);
        \draw[arc] (dup) -- node[pos=0.5,sloped,above] {$c_2$} node[pos=1,right] {$( 1, 2, 3, \dots )$} (3,-0.8);
      \end{tikzpicture}
    }
  \end{subfigure}
  \begin{subfigure}{\columnwidth}
    \centering
    \caption{Discard.}
    \label{fig:discard}
    \begin{tikzpicture}
      \node[virtualactor] (dis) at (0,0) {Discard};
      \draw[arc] (-2,0) -- node[pos=0,left] {$( 1, 2, 3, \dots )$} (dis);
    \end{tikzpicture}
  \end{subfigure}
  \caption{Illustration of a splitter, a joiner, a duplicater, and a discard with their input and output traces. Rates of $1$ and initial tokens of $0$ are omitted for clarity.}
\end{figure}
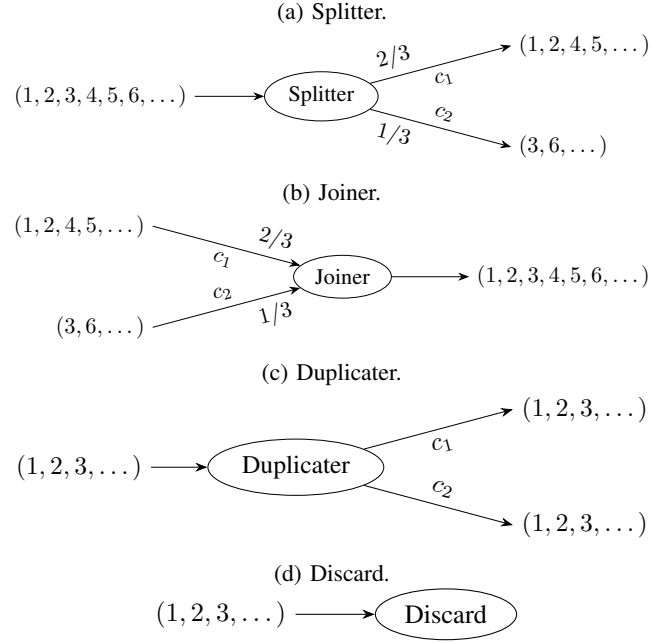

In order to ease the extension of static analysis from PolyGraph to the routed PolyGraph and then to RMDF, we enforce that an execution of a splitter or a joiner consumes or produces one token at a time. To that end, the production rate of the input/output channel of a splitter/joiner is equal to 1. In addition, the sum of the production/consumption rates of the output/input channels of a splitter/joiner is equal to 1. For instance, the production rates of the splitter of \cref{fig:splitter} are $2/3$ and $1/3$, whose sum equals 1.

\subsection{Static Analysis of a Routed PolyGraph Specification}

There is an equivalence between a PolyGraph specification and a routed PolyGraph specification. \cref{algo:removing-routing-actors} details the transformation of a routed PolyGraph specification to a semantically equivalent PolyGraph specification.

\begin{definition}[Floor and ceil functions]
  The floor function $\lfloor x \rfloor$ returns the greatest integer less than or equal to $x$. The ceil function $\lceil x \rceil$ returns the smallest integer greater than or equal to $x$.
\end{definition}

To illustrate, let $G = (V, E)$ be a routed PolyGraph specification with a splitter $v \in V$ that has $n$ output channels $c_1, \dots, c_n$ with the numerator of their production rate denoted as $\gamma_1, \dots, \gamma_n$. The splitter $v$ is removed by modifying the producer actor of the channel $c_i$ from $v$ to the predecessor of $v$. At this point, the predecessor of $v$ may send more than one token at some executions and no tokens at others. In order to reproduce the expected behavior of the splitter, one token has to be sent each time the predecessor of $v$ executes. This is done by shifting the production rate with initial tokens; a challenge here is to find the correct amount of initial tokens.

There is a bijection from the production rate and the initial tokens of a channel to the sequence of tokens sent to this channel. This bijection can be constructed by exhaustively enumerating all sequences generated by a production rate and all values of initial tokens (cf. \cref{algo:rate-and-initial-tokens-production-sequence}). For instance, the production sequence $[1, 1, 0]$ is associated with the production rate $2/3$. The questions is \enquote{how many initial tokens are needed to generate this production sequence?}. The production sequence $[1, 1, 0]$ is associated with the production rate $2/3$, so possible initial tokens are $[2/3, 1/3, 0]$. The production rate $2/3$ with no initial token generates the production sequence $[0, 1, 1]$, the production rate $2/3$ with $1/3$ initial token generates the production sequence $[1, 0, 1]$, and the production rate $2/3$ with $2/3$ initial token generates the production sequence $[1, 1, 0]$. We can see that the production sequence $[1, 1, 0]$ is associated with $2/3$ initial tokens.

The production sequence can be easily computed (cf. \cref{line:cons-sequence-1} to \cref{line:cons-sequence-2} of \cref{algo:removing-routing-actors}). By following the procedure described in \cref{algo:rate-and-initial-tokens-production-sequence}, the production rate and the required initial tokens can be derived. Initial tokens are associated to all output channels of a splitter with this approach bedore removing the splitter. As a consequence, the behavior of the splitter is preserved while removing it.

The same logic applies to removing a joiner using \cref{algo:rate-and-initial-tokens-consumption-sequence}. A discard is removed while removing the splittr, as a discard is always an output of a splitter. Removing a duplicater is trivial. \cref{fig:joiner-splitter-duplicater-transformation} illustrates the transformation of a routed PolyGraph specification to a semantically equivalent PolyGraph specification by removing a splitter (cf. \cref{fig:splitter-removal}), a joiner (cf. \cref{fig:joiner-removal}), a duplicater (cf. \cref{fig:duplicater-removal}), and a discard (cf. \cref{fig:discard-removal}).

\begin{algorithm}[htbp]
  \caption{Compute the rate and the number of initial tokens that generate a production sequence.}
  \label{algo:rate-and-initial-tokens-production-sequence}
  \KwIn{A production sequence $s$ of length $n$.}
  \KwOut{The production rate and the number of initial tokens generating the input production sequence.}
  $prod\_rate \gets \sum_{i = 1}^{n} s.get(i) / n$ \;
  $init\_tkn \gets 0$ \;
  \While{$init\_tkn \neq 1$}
  {
    $prod\_seq \gets []$ \;
    \For{$j \gets 1$ \KwTo $n$}
    {
      $prod\_seq.get(i) \gets \lfloor n \cdot \gamma + init\_tkn \rfloor - \lfloor (n - 1) \cdot \gamma + init\_tkn \rfloor$ \;
    }
    \If{$s = seq$}
    {
      \KwResult{$prod\_rate, init\_tkn$}
    }
    $init\_tkn \gets init\_tkn + 1/n$ \;
  }
\end{algorithm}

\begin{algorithm}[htbp]
  \caption{Compute the rate and the number of initial tokens that generate a consumption sequence.}
  \label{algo:rate-and-initial-tokens-consumption-sequence}
  \KwIn{A consumption sequence $s$ of length $n$.}
  \KwOut{The consumption rate and the number of initial tokens generating the input consumption sequence.}
  $cons\_rate \gets \sum_{i = 1}^{n} s.get(i) / n$ \;
  $init\_tkn \gets 0$ \;
  \While{$init\_tkn \neq 1$}
  {
    $cons\_seq \gets []$ \;
    \For{$j \gets 1$ \KwTo $n$}
    {
      $cons\_seq.get(i) \gets \lceil n \cdot \gamma - init\_tkn \rceil - \lceil (n - 1) \cdot \gamma - init\_tkn \rceil$ \;
    }
    \If{$s = cons\_seq$}
    {
      \KwResult{$cons\_rate, init\_tkn$}
    }
    $init\_tkn \gets init\_tkn + 1/n$ \;
  }
\end{algorithm}

\begin{algorithm}[htbp]
  \caption{Removal of routing actors from a routed Polygraph specification.}
  \label{algo:removing-routing-actors}
  \KwIn{A routed Polygraph specification $G = (V, E)$.}
  \KwOut{A semantically equivalent Polygraph specification without routing actors.}
  \For{$v \in V$}
  {
    \If{$v.is\_splitter$}
    {
      \For{$c \in v.output\_channels$}
      {
        $cons\_seq \gets []$ \;
        \For{$c' \in v.output\_channels$}
        {
          \label{line:cons-sequence-1}
          \eIf{$c'.index < c.index~\mathbf{or}~c'.index > c.index$}
          {
            $cons\_seq.append(0)$ \;
          }
          {
            $cons\_seq.append(1)$ \;
            \label{line:cons-sequence-2}
          }
        }
        $\_, init\_tkn \gets \cref{algo:rate-and-initial-tokens-consumption-sequence}~(cons\_seq, len(cons\_seq))$~;
        $E = E \cup \{ (v.pred, c.cons, c.prod\_rate, 1, [init\_tkn]) \} $\;
      }
    }
    \If{$v.is\_joiner$}
    {
      \For{$c \in v.input\_channels$}
      {
        $prod\_seq \gets []$ \;
        \For{$c' \in v.input\_channels$}
        {
          \eIf{$c'.index < c.index~\mathbf{or}~c'.index > c.index$}
          {
            $prod\_seq.append(0)$ \;
          }
          {
            $prod\_seq.append(1)$ \;
          }
        }
        $\_, init\_tkn \gets \cref{algo:rate-and-initial-tokens-production-sequence}~ (prod\_seq, len(prod\_seq))$~;
        $E = E \cup \{ (c.prod, v.succ, 1, c.cons\_rate, [init\_tkn]) \}$\;
      }
    }
    \If{$v.is\_duplicater$}
    {
      \For{$c \in v.output\_channels$}
      {
        $E = E \cup \{ (v.pred, c.cons, 1, 1, [0]) \}$\;
      }
    }
  }
  \For{$v \in V$}
  {
    \If{$v.is\_splitter$ \textbf{or} $v.is\_duplicater$}
    {
      \For{$c \in v.output\_channels$}
      {
        $E = E \setminus \{ c \}$\;
      }
      $V = V \setminus \{ v \}$\;
    }
    \If{$v.is\_joiner$}
    {
      \For{$c \in v.input\_channels$}
      {
        $E = E \setminus \{ c \}$\;
      }
      $V = V \setminus \{ v \}$\;
    }
    \If{$v.is\_discard$}
    {
      \For{$c \in v.input\_channels$}
      {
        $E = E \setminus \{ c \}$\;
      }
      $V = V \setminus \{ v \}$\;
    }
  }
  \KwResult{$G = (V, E)$}
\end{algorithm}

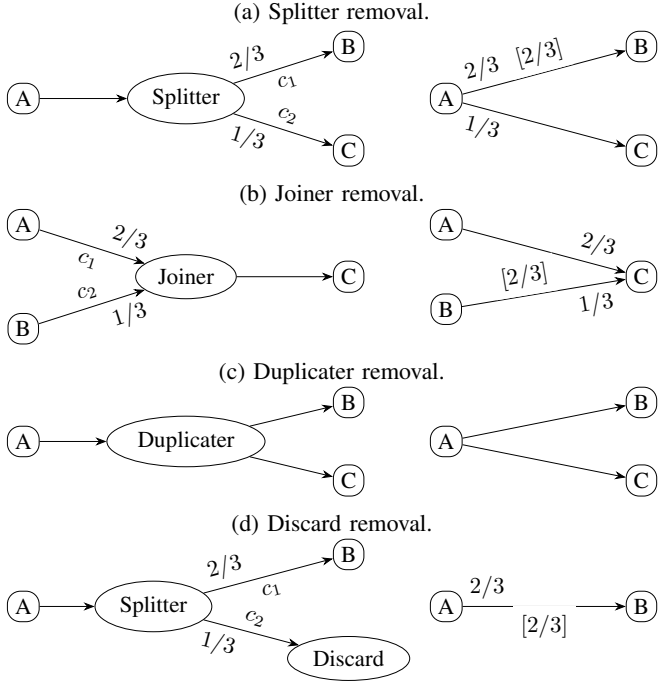
\begin{figure}[htbp]
  \centering
  \begin{subfigure}[c]{\columnwidth}
    \centering
    \caption[Splitter removal]{Splitter removal.}
    \label{fig:splitter-removal}
    \resizebox{\columnwidth}{!}{
      \begin{tikzpicture}
        \node[virtualactor] (spl) at (0,2) {Splitter};
        \node[actor] (a1) at (-2.5,2) {A};
        \node[actor] (b1) at (2.5,2.8) {B};
        \node[actor] (c1) at (2.5,1.2) {C};
        \draw[arc] (a1) -- (spl);
        \draw[arc] (spl) -- node[pos=0.2,sloped,above] {$2/3$} node[pos=0.5,sloped,below] {$c_1$} (b1);
        \draw[arc] (spl) -- node[pos=0.2,sloped,below] {$1/3$} node[pos=0.5,sloped,above] {$c_2$} (c1);
        \node[actor] (a12) at (4,2) {A};
        \node[actor] (b12) at (7, 2.8) {B};
        \node[actor] (c12) at (7,1.2) {C};
        \draw[arc] (a12) -- node[pos=0.15,sloped,above] {$2/3$} node[pos=0.5,sloped,fill=white,above] {$[2/3]$} (b12);
        \draw[arc] (a12) -- node[pos=0.15,sloped,below] {$1/3$} (c12);
      \end{tikzpicture}
    }
  \end{subfigure}
  \begin{subfigure}[c]{\columnwidth}
    \centering
    \caption[Joiner removal]{Joiner removal.}
    \label{fig:joiner-removal}
    \resizebox{\columnwidth}{!}{
      \begin{tikzpicture}
        \node[virtualactor] (joi) at (0,-0.6) {Joiner};
        \node[actor] (a2) at (2.5,-0.6) {C};
        \node[actor] (b2) at (-2.5,0.2) {A};
        \node[actor] (c2) at (-2.5,-1.4) {B};
        \draw[arc] (joi) -- (a2);
        \draw[arc] (b2) -- node[pos=0.5,sloped,below] {$c_1$} node[pos=0.8,sloped,above] {$2/3$} (joi);
        \draw[arc] (c2) -- node[pos=0.5,sloped,above] {$c_2$} node[pos=0.8,sloped,below] {$1/3$} (joi);
        \node[actor] (a22) at (7,-0.6) {C};
        \node[actor] (b22) at (4,0.2) {A};
        \node[actor] (c22) at (4,-1.1) {B};
        \draw[arc] (b22) -- node[pos=0.8,sloped,above] {$2/3$} (a22);
        \draw[arc] (c22) -- node[pos=0.8,sloped,below] {$1/3$} node[pos=0.4,sloped,fill=white,above] {$[2/3]$} (a22);
      \end{tikzpicture}
    }
  \end{subfigure}
  \begin{subfigure}[c]{\columnwidth}
    \centering
    \caption[Duplicater removal]{Duplicater removal.}
    \label{fig:duplicater-removal}
    \resizebox{\columnwidth}{!}{
      \begin{tikzpicture}
        \node[virtualactor] (dup) at (0,-3.2) {Duplicater};
        \node[actor] (a3) at (-2.5,-3.2) {A};
        \node[actor] (b3) at (2.5,-2.6 ) {B};
        \node[actor] (c3) at (2.5,-3.8) {C};
        \draw[arc] (a3) -- (dup);
        \draw[arc] (dup) -- (b3);
        \draw[arc] (dup) -- (c3);
        \node[actor] (a32) at (4,-3.2) {A};
        \node[actor] (b32) at (7,-2.6) {B};
        \node[actor] (c32) at (7,-3.8) {C};
        \draw[arc] (a32) -- (b32);
        \draw[arc] (a32) -- (c32);
      \end{tikzpicture}
    }
  \end{subfigure}
  \begin{subfigure}[c]{\columnwidth}
    \centering
    \caption[Discard removal]{Discard removal.}
    \label{fig:discard-removal}
    \resizebox{\columnwidth}{!}{
      \begin{tikzpicture}
        \node[virtualactor] (spl) at (-0.5,-5.6) {Splitter};
        \node[actor] (a1) at (-2.5,-5.6) {A};
        \node[actor] (b1) at (2.5,-4.8) {B};
        \node[virtualactor] (c1) at (2.5,-6.4) {Discard};
        \draw[arc] (a1) -- (spl);
        \draw[arc] (spl) -- node[pos=0.2,sloped,above] {$2/3$} node[pos=0.5,sloped,below] {$c_1$} (b1);
        \draw[arc] (spl) -- node[pos=0.2,sloped,below] {$1/3$} node[pos=0.5,sloped,above] {$c_2$} (c1);
        \node[actor] (a12) at (4,-5.6) {A};
        \node[actor] (b12) at (7, -5.6) {B};
        \draw[arc] (a12) -- node[pos=0.15,sloped,above] {$2/3$} node[pos=0.5,sloped,fill=white,below] {$[2/3]$} (b12);
      \end{tikzpicture}
    }
  \end{subfigure}
  \caption{Removal of routing actors while preserving the semantics of data communication pattern. Rates of $1$ and initial tokens of $0$ are omitted for clarity.}
  \label{fig:joiner-splitter-duplicater-transformation}
\end{figure}

The resulting PolyGraph specification is less explicit as there are no routing actors. However, the static analyses of PolyGraph apply to the transformed PolyGraph specification.

\section{RMDF model: Syntax and Semantics}

\label{sec:rmdf}

This section explains how splitters and joiners are extended to the controlled splitters and controlled joiners. Those two latter actors will be used to route data tokens into conditional execution branches in an RMDF specification.

\subsection{Controlled Routing Actors and Mode-Dependent Execution}

A controlled splitter/joiner is a splitter/joiner with an additional input \emph{control channel}. A \emph{control channel} is where transit \emph{control token}. A controlled splitter/joiner, which is illustrated in \cref{fig:controlled-splitter} and \cref{fig:controlled-joiner}, alternates between consuming a control token and consuming/producing a data token. The control token decides which channel the next data token will be sent/received, leading to a \emph{mode-dependent} execution. In addition, the production/consumption rates of the output/input channels of a controlled splitter/joiner are parameters that take the value 0 or 1. For instance, parameters\footnote{$m_1$ and $m_2$ are symbolic rates which are assigned a value at runtime} $m_1$ and $m_2$ in \cref{fig:controlled-splitter} and \cref{fig:controlled-joiner} take the value 0 or 1. The actual value of $m_1$ and $m_2$ is determined at runtime by the value carried by control tokens consumed from $c_2$ in \cref{fig:controlled-splitter} and \cref{fig:controlled-joiner}.

As for the non-controlled splitter/joiner, we want to ensure that executing a controlled splitter/joiner consumes/produces one token at a time. To that end, we enforce that a parameter assignation (i.e., transforming $m_1$ and $m_2$ into 0 or 1 at runtime) to adhere to the following constraints: exactly one parameter must be equal to 1, and the other parameters must be equal to 0. In \cref{fig:controlled-splitter-joiner}, we have either $m_1 = 1$ and $m_2$ = 0 or $m_1 = 0$ and $m_2 = 1$. We will detail later in the paper how this constraint is formally defined.

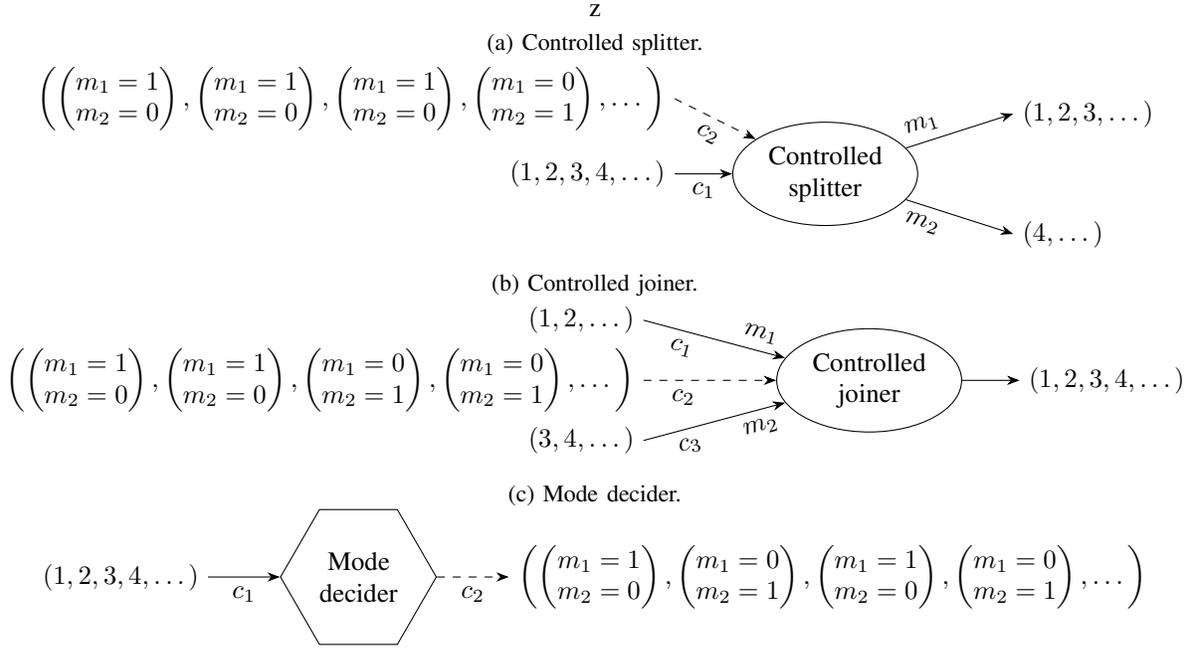
\begin{figure*}[htbp]
  \centering
  \begin{subfigure}{\textwidth}z
    \centering
    \caption{Controlled splitter.}
    \label{fig:controlled-splitter}
    \begin{tikzpicture}
      \node[virtualactor] (spl) at (0,0) {Controlled \\ splitter};
      \draw[arc] (-2,0) -- node[pos=0,left] {$( 1, 2, 3, 4, \dots )$} node[pos=0.5,sloped,below] {$c_1$} (spl);
      \draw[arccontrol] (-2,1) -- node[pos=0,left] {$\biggl( \begin{pmatrix} m_1 = 1 \\ m_2 = 0 \end{pmatrix}, \begin{pmatrix} m_1 = 1 \\ m_2 = 0 \end{pmatrix}, \begin{pmatrix} m_1 = 1 \\ m_2 = 0 \end{pmatrix}, \begin{pmatrix} m_1 = 0 \\ m_2 = 1 \end{pmatrix}, \dots \biggl)$} node[pos=0.5,sloped,below] {$c_2$} (spl);
      \draw[arc] (spl) -- node[pos=0.2,sloped,above] {$m_1$} node[pos=1,right] {$( 1, 2, 3, \dots )$} (2.5,0.8);
      \draw[arc] (spl) -- node[pos=0.2,sloped,below] {$m_2$} node[pos=1,right] {$( 4, \dots )$} (2.5,-0.8);
    \end{tikzpicture}
  \end{subfigure}
  \begin{subfigure}{\textwidth}
    \centering
    \caption{Controlled joiner.}
    \label{fig:controlled-joiner}
    \begin{tikzpicture}
      \node[virtualactor] (joi) at (0,0) {Controlled \\ joiner};
      \draw[arc] (joi) -- node[pos=1,right] {$( 1, 2, 3, 4, \dots )$} (2,0);
      \draw[arccontrol] (-3,0) -- node[pos=0,left] {$\biggl( \begin{pmatrix} m_1 = 1 \\ m_2 = 0 \end{pmatrix}, \begin{pmatrix} m_1 = 1 \\ m_2 = 0 \end{pmatrix}, \begin{pmatrix} m_1 = 0 \\ m_2 = 1 \end{pmatrix}, \begin{pmatrix} m_1 = 0 \\ m_2 = 1 \end{pmatrix}, \dots \biggl)$} node[pos=0.3,sloped,below] {$c_2$} (joi);
      \draw[arc] (-3,0.8) -- node[pos=0,left] {$( 1, 2, \dots )$} node[pos=0.3,sloped,below] {$c_1$} node[pos=0.8,sloped,above] {$m_1$}  (joi);
      \draw[arc] (-3,-0.8) -- node[pos=0,left] {$( 3, 4, \dots )$} node[pos=0.3,sloped,below] {$c_3$} node[pos=0.8,sloped,below] {$m_2$}  (joi);
    \end{tikzpicture}
  \end{subfigure}
  \begin{subfigure}{\textwidth}
    \centering
    \caption{Mode decider.}
    \label{fig:mode-decider}
    \begin{tikzpicture}
      \node[decider] (md) at (0,0) {Mode \\ decider};
      \draw[arc] (-2,0) -- node[midway,below] {$c_1$} node[pos=0,left] {$( 1, 2, 3, 4, \dots )$} (md);
      \draw[arc,dashed] (md) -- node[midway,below] {$c_2$} node[pos=1,right] {$\biggl( \begin{pmatrix} m_1 = 1 \\ m_2 = 0 \end{pmatrix}, \begin{pmatrix} m_1 = 0 \\ m_2 = 1 \end{pmatrix}, \begin{pmatrix} m_1 = 1 \\ m_2 = 0 \end{pmatrix}, \begin{pmatrix} m_1 = 0 \\ m_2 = 1 \end{pmatrix}, \dots \biggl)$} (2,0);
    \end{tikzpicture}
  \end{subfigure}
  \caption{An illustration of a controlled splitter and a controlled joiner with their input and output traces. Dashed lines represent control channels, and plain lines represent data channels. Rates of $1$ and initial tokens of $0$ are omitted for clarity.}
  \label{fig:controlled-splitter-joiner}
\end{figure*}

\subsection{Mode Decider}

As controlled splitters and joiners consume control tokens, we need an actor to produce them. A \emph{mode decider} consumes a data token from its input channels and produces a control token on its output control channel. The algorithm inside the mode decider that determines the value of the control token is usually a switch case, but it can be more complex depending on the application's needs. A mode decider has exactly one input data channel\footnote{It would not be difficult to consider more than one input data channel for mode deciders.} (the channel $c_1$ in \cref{fig:mode-decider}) and exactly one output control channel (the channel $c_2$ in \cref{fig:mode-decider}), with this latter being connected to a duplicater. This duplicater is connected to one or many controlled splitters and controlled joiners. \cref{fig:mode-decider} illustrates a mode decider.

\subsection{A Complete RMDF Specification}

\cref{fig:example-rmdf} illustrates an RMDF specification with three conditional execution branches (C$_1$ to C$_{n_1}$, D$_1$ to D$_{n_2}$ and E$_1$ to E$_{n_3}$ with $n_1, n_2, n_3 \in \mathbb{N}^*$). An RMDF specification is a tuple $(V, E, M)$ where $V$ is a finite set of actors, $E$ is a finite set of channels, and $M$ is a set of functions that enforce the production and consumption of one token at a time for the controlled splitters and joiners. This set of functions of the RMDF specification of \cref{fig:example-rmdf} is $M = \{ M_1, M_2, M_3 \}$ as defined in the right bottom corner of \cref{fig:example-rmdf}. Each function specifies an execution mode. In \cref{fig:example-rmdf}, the actor B is a mode decider. The control token it produces is sent to a duplicater and then to a controlled splitter and a controlled joiner. The value carried by the control token (produced by B) will be used at runtime to determine the value of parameters $m_1$, $m_2$ and $m_3$.

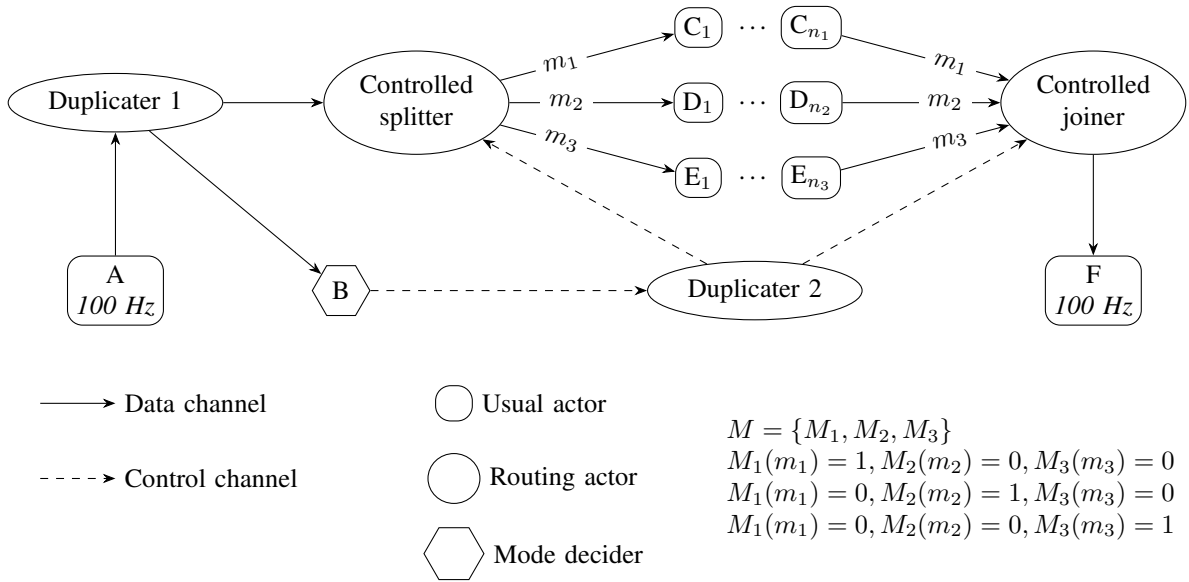
\begin{figure*}[htbp]
  \centering
  \begin{tikzpicture}
    \node[actor] (a) at (-1,-2.5) {A \\ \emph{100 Hz}};
    \node[virtualactor] (dup-1) at (-1,0) {Duplicater 1};
    \node[virtualactor] (dup-2) at (7.5,-2.5) {Duplicater 2};
    \node[virtualactor] (cs) at (3,0) {Controlled \\ splitter};
    \node[virtualactor] (cj) at (12,0) {Controlled \\ joiner};
    \node[decider] (b) at (2,-2.5) {B};
    \node[actor] (c1) at (6.75,1) {C$_1$};
    \node (ci) at (7.5,1) {\dots};
    \node[actor] (cl) at (8.25,1) {C$_{n_1}$};
    \node[actor] (d1) at (6.75,0) {D$_1$};
    \node (di) at (7.5,0) {\dots};
    \node[actor] (dm) at (8.25,0) {D$_{n_2}$};
    \node[actor] (e1) at (6.75,-1) {E$_1$};
    \node (ei) at (7.5,-1) {\dots};
    \node[actor] (en) at (8.25,-1) {E$_{n_3}$};
    \node[actor] (f) at (12,-2.5) {F \\ \emph{100 Hz}};
    \draw[arc] (a) to (dup-1);
    \draw[arc] (cj) to (f);
    \draw[arc] (dup-1) to (cs);
    \draw[arc] (dup-1) to (b);
    \draw[arc] (cs) to node[sloped,fill=white,pos=0.35] {$m_1$} (c1);
    \draw[arc] (cs) to node[sloped,fill=white,pos=0.35] {$m_2$} (d1);
    \draw[arc] (cs) to node[sloped,fill=white,pos=0.35] {$m_3$} (e1);
    \draw[arc] (en) to node[sloped,fill=white,pos=0.65] {$m_3$} (cj);
    \draw[arc] (dm) to node[sloped,fill=white,pos=0.65] {$m_2$} (cj);
    \draw[arc] (cl) to node[sloped,fill=white,pos=0.65] {$m_1$} (cj);
    \draw[arc,dashed] (b) to (dup-2);
    \draw[arc,dashed] (dup-2) to (cs);
    \draw[arc,dashed] (dup-2) to (cj);
    \draw[arc] (-2,-4) to (-1,-4);
    \node[anchor=west] at (-1,-4) {Data channel};
    \draw[arc,dashed] (-2,-5) to (-1,-5);
    \node[anchor=west] at (-1,-5) {Control channel};
    \node[actor] (toto) at (3.5,-4) {\textcolor{white}{A}};
    \node[anchor=west] at (toto.east) {Usual actor};
    \node[decider, aspect=0.5] (titi) at (3.5,-6) {\textcolor{white}{A}};
    \node[anchor=west] at (titi.east) {Mode decider};
    \node[virtualactor] (tutu) at (3.5,-5) {\textcolor{white}{A}};
    \node[anchor=west] at (tutu.east) {Routing actor};
    \node[anchor=west,align=left] at (7,-5) {$M = \{ M_1, M_2, M_3 \} $ \\ $M_1(m_1) = 1, M_2(m_2) = 0, M_3(m_3) = 0$ \\ $M_1(m_1) = 0, M_2(m_2) = 1, M_3(m_3) = 0$ \\ $M_1(m_1) = 0, M_2(m_2) = 0, M_3(m_3) = 1$};
  \end{tikzpicture}
  \caption{An example of an RMDF specification. A usual actor is an actor which is neither (non-) controlled splitter or joiner, nor a mode decider.}
  \label{fig:example-rmdf}
\end{figure*}

\section{Static Analysis of an RMDF Specification}

\label{sec:static-analysis}

\subsection{Control Areas}

In order to ease the extension of the static analysis of the PolyGraph model to the RMDF model, we enforce that an execution of a controlled splitter/joiner consumes/produces one token at a time. In addition, we impose conditions on using the mode deciders, controlled splitters, and controlled joiners in an RMDF specification. Let us define \emph{control areas} (\cref{def:control-area}) on which restrictions (\cref{prop:restrictions-control-areas}) will be applied. A control area is a disjoint directed mesh where the source and sink actors are a controlled splitter and controlled joiner.

\begin{definition}[Control area of a mode decider]
  \label{def:control-area}
  Let $G = (V, E, M)$ be an RMDF specification, let $v_{md} \in V$ be a mode decider, let $dup(v_{md})$ be the duplicater connected to $v_{md}$, let $cs(v_{md})$ be the controlled splitters connected to $dup(v_{md})$ and let $cj(v_{md})$ be the controlled joiners connected to $dup(v_{md})$. The \emph{control area} of $v_{md}$, defined as $control(v_{md})$, is the set of actors conditioned by the execution mode output by $v_{md}$. It is the intersection of all the successors of controlled splitters belonging to $cs(v_{md})$ and all the predecessors of the controlled joiners belonging to $cj(v_{md})$. Formally, we have $control(v_{md}) = \{ v \in V | \exists (v_1, \dots, v_i, \dots, v_n), v_1 \in cs(v_{md}), v_i = v, v_n \in cj(v_{md}), \forall 1 \leq j < n : v_{j+1} \in succ(v_j) \}$ with $succ(v)$ the set of successors of $v$.
\end{definition}

For instance, in \cref{fig:example-rmdf}, the actor B is a mode decider. We have $dup(B) = \{ Duplicater~2$ \}, $cs(B) = \{ Controlled~splitter \}$ and $cj(B) = \{ Controlled~joiner \}$. The control area of the actor $B$ is $\{ C_1, \dots, C_{n_1}, D_1, \dots, D_{n_2}, E_1, \dots, E_{n_3} \}$. The following property imposes restrictions on the control areas of an RMDF specification.

\begin{property}[Restrictions on controls areas]
  \label{prop:restrictions-control-areas}
  Let $G = (V, E, M)$ be an RMDF specification:
  \begin{enumerate}
    \item Execution of actors in a control area must depend on exactly one execution mode. \cref{algo:mode-coherence} return \textbf{True} if this condition is fulfilled, \textbf{False} otherwise (see \cref{fig:control-area-1} for an RMDF specification that does not fulfill this requirement).
    \item Let $v_{md} \in V$ be a mode decider. There is no channel from an actor $v_1 \in control(v_{md})$ / $v_1 \notin control(v_{md})$ to an actor $v_2 \notin control(v_{md})$ / $v_2 \in control(v_{md})$ (cf. \cref{fig:control-area-2}).
    \item If there are timed actors in a control area, they must have the same frequency (cf. \cref{fig:control-area-3}).
    \item The production and consumption rate inside a control area and the parametric rates associated with the controlled joiner and controlled splitter of that control area must be equal to 1 or 0 (cf. \cref{fig:control-area-4}).
    \item Let $v_{md} \in V$ be a mode decider. When all parametric rates of a controlled splitter are assigned a value, their sum is equal to 1. The same condition holds for the parametric rates of a controlled joiner (cf. \cref{fig:control-area-5}).
  \end{enumerate}
\end{property}

\begin{algorithm}[htbp]
  \caption{Checking that actors' execution in control areas of an RMDF specification depends on a single execution mode. Some work remains to handle loops within a branch.}
  \label{algo:mode-coherence}
  \KwIn{An RMDF specification $G = (V, E, M)$.}
  \KwOut{$\mathbf{True}$ if execution of the actors of $G$ depends on a single mode, $\mathbf{False}$ otherwise.}
  \Def{label\_successor(v, associated\_mode)}
  {
    \For{$c \in v.output\_channels$}
    {
      \If{$mode \notin c.conditioning\_mode$}
      {
        $c.conditioning\_mode.append(mode)$ \;
      }
      \If{$len(c.conditioning\_mode) = 2$}
      {
        \KwResult{$\mathbf{False}$}
      }
      \eIf{$c.consumer.is\_controlled\_joiner$}
      {
        \KwResult{$\mathbf{True}$}
      }
      {
        $label\_successor(c.consumer, mode)$ \;
      }
    }
  }
  \For{$v \in V$}
  {
    \If{$v.is\_mode\_decider$}
    {
      \For{$v \in cs(v)$}
      {
        $mode = 0$ \;
        \For{$c \in v.output\_channels$}
        {
          \If{$label\_successors(c.consumer, mode) = \mathbf{False}$}
          {
            \KwResult{$\mathbf{False}$}
          }
          $mode \gets mode + 1$ \;
        }
      }
    }
  }
  \KwResult{$\mathbf{True}$}
\end{algorithm}

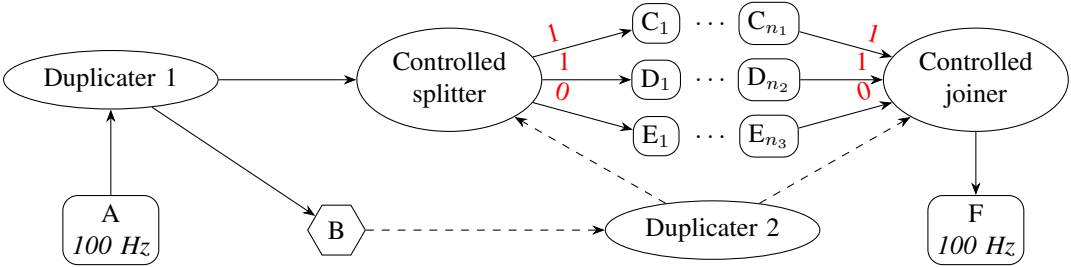
\begin{figure*}[htbp]
  \begin{subfigure}{\textwidth}
    \centering
    \caption{Actors C$_1$ and D$_{n_2}$ are not in two distincts branches.}
    \label{fig:control-area-1}
    \begin{tikzpicture}
      \node[actor] (a) at (-1,-2) {A \\ \emph{100 Hz}};
      \node[virtualactor] (dup-1) at (-1,0) {Duplicater 1};
      \node[virtualactor] (dup-2) at (7,-2) {Duplicater 2};
      \node[virtualactor] (cs) at (3.5,0) {Controlled \\ splitter};
      \node[virtualactor] (cj) at (10.5,0) {Controlled \\ joiner};
      \node[decider] (b) at (2,-2) {B};
      \node[actor] (c) at (6.25,0.75) {C$_1$};
      \node (dots1) at (7,0.75) {$\dots$};
      \node[actor] (c2) at (7.75,0.75) {C$_{n_1}$};
      \node[actor] (d) at (6.25,0) {D$_1$};
      \node (dots2) at (7,0) {$\dots$};
      \node[actor] (d2) at (7.75,0) {D$_{n_2}$};
      \node[actor] (e) at (6.25,-0.75) {E$_1$};
      \node (dots3) at (7,-0.75) {$\dots$};
      \node[actor] (e2) at (7.75,-0.75) {E$_{n_3}$};
      \node[actor] (f) at (10.5,-2) {F \\ \emph{100 Hz}};
      \draw[arc] (a) to (dup-1);
      \draw[arc] (cj) to (f);
      \draw[arc] (dup-1) to (b);
      \draw[arc] (dup-1) to (cs);
      \draw[arc] (cs) to node[sloped,fill=white,pos=0.35] {$m_1$} (c);
      \draw[arc] (cs) to node[sloped,fill=white,pos=0.35] {$m_1$} (d);
      \draw[arc] (cs) to node[sloped,fill=white,pos=0.35] {$m_3$} (e);
      \draw[arc] (c2) to node[sloped,fill=white,pos=0.6] {$m_1$} (cj);
      \draw[arc] (d2) to node[sloped,fill=white,pos=0.6] {$m_2$} (cj);
      \draw[arc] (e2) to node[sloped,fill=white,pos=0.6] {$m_3$} (cj);
      \draw[arc,line width=1.5pt,red] (c) to (d2);
      \draw[arc,dashed] (b) to (dup-2);
      \draw[arc,dashed] (dup-2) to (cj);
      \draw[arc,dashed] (dup-2) to (cs);
    \end{tikzpicture}
  \end{subfigure}
  \begin{subfigure}{\textwidth}
    \centering
    \caption{There is a channel from an actor outside the control area to an actor inside the control area, and conversely.}
    \label{fig:control-area-2}
    \begin{tikzpicture}
      \clip (-3.6,2) rectangle (12.4,-2.7);
      \node[actor] (a) at (-1,-2) {A \\ \emph{100 Hz}};
      \node[virtualactor] (dup-1) at (-1,0) {Duplicater 1};
      \node[virtualactor] (dup-2) at (7,-2) {Duplicater 2};
      \node[virtualactor] (cs) at (3.5,0) {Controlled \\ splitter};
      \node[virtualactor] (cj) at (10.5,0) {Controlled \\ joiner};
      \node[decider] (b) at (2,-2) {B};
      \node[actor] (c) at (6.25,0.75) {C$_1$};
      \node (dots1) at (7,0.75) {$\dots$};
      \node[actor] (c2) at (7.75,0.75) {C$_{n_1}$};
      \node[actor] (d) at (6.25,0) {D$_1$};
      \node (dots2) at (7,0) {$\dots$};
      \node[actor] (d2) at (7.75,0) {D$_{n_2}$};
      \node[actor] (e) at (6.25,-0.75) {E$_1$};
      \node (dots3) at (7,-0.75) {$\dots$};
      \node[actor] (e2) at (7.75,-0.75) {E$_{n_3}$};
      \node[actor] (f) at (10.5,-2) {F \\ \emph{100 Hz}};
      \draw[arc] (a) to (dup-1);
      \draw[arc] (cj) to (f);
      \draw[arc] (dup-1) to (b);
      \draw[arc] (dup-1) to (cs);
      \draw[arc] (cs) to node[sloped,fill=white,pos=0.35] {$m_1$} (c);
      \draw[arc] (cs) to node[sloped,fill=white,pos=0.35] {$m_1$} (d);
      \draw[arc] (cs) to node[sloped,fill=white,pos=0.35] {$m_3$} (e);
      \draw[arc] (c2) to node[sloped,fill=white,pos=0.6] {$m_1$} (cj);
      \draw[arc] (d2) to node[sloped,fill=white,pos=0.6] {$m_2$} (cj);
      \draw[arc] (e2) to node[sloped,fill=white,pos=0.6] {$m_3$} (cj);
      \draw[arc,line width=1.5pt,red] (a) to[out=140,in=170,looseness=2] (c);
      \draw[arc,line width=1.5pt,red] (c) to[out=20,in=40,looseness=2.2] (f);
      \draw[arc,dashed] (b) to (dup-2);
      \draw[arc,dashed] (dup-2) to (cj);
      \draw[arc,dashed] (dup-2) to (cs);
    \end{tikzpicture}
  \end{subfigure}
  \begin{subfigure}{\textwidth}
    \centering
    \caption{Actors C, D, and E have different frequencies. For readability, each conditional execution branch has exactly one actor.}
    \label{fig:control-area-3}
    \begin{tikzpicture}
      \node[actor] (a) at (-1,-2) {A \\ \emph{100 Hz}};
      \node[virtualactor] (dup-1) at (-1,0) {Duplicater 1};
      \node[virtualactor] (dup-2) at (7,-2) {Duplicater 2};
      \node[virtualactor] (cs) at (3.5,0) {Controlled \\ splitter};
      \node[virtualactor] (cj) at (10.5,0) {Controlled \\ joiner};
      \node[decider] (b) at (2,-2) {B};
      \node[actor,fill=red!20] (c) at (7,1) {C \\ \emph{25 Hz}};
      \node[actor,fill=red!20] (d) at (7,0) {D \\ \emph{50 Hz}};
      \node[actor,fill=red!20] (e) at (7,-1) {E \\ \emph{100 Hz}};
      \node[actor] (f) at (10.5,-2) {F \\ \emph{100 Hz}};
      \draw[arc] (a) to (dup-1);
      \draw[arc] (cj) to (f);
      \draw[arc] (dup-1) to (b);
      \draw[arc] (dup-1) to (cs);
      \draw[arc] (cs) to node[sloped,fill=white,pos=0.35] {$m_1$} (c);
      \draw[arc] (cs) to node[sloped,fill=white,pos=0.35] {$m_1$} (d);
      \draw[arc] (cs) to node[sloped,fill=white,pos=0.35] {$m_3$} (e);
      \draw[arc] (c) to node[sloped,fill=white,pos=0.6] {$m_1$} (cj);
      \draw[arc] (d) to node[sloped,fill=white,pos=0.6] {$m_2$} (cj);
      \draw[arc] (e) to node[sloped,fill=white,pos=0.6] {$m_3$} (cj);
      \draw[arc,dashed] (b) to (dup-2);
      \draw[arc,dashed] (dup-2) to (cj);
      \draw[arc,dashed] (dup-2) to (cs);
    \end{tikzpicture}
  \end{subfigure}
  \begin{subfigure}{\textwidth}
    \centering
    \caption{At least one production or consumption rate of the control area is not equal to 0 or 1.}
    \label{fig:control-area-4}
    \begin{tikzpicture}
      \node[actor] (a) at (-1,-2) {A \\ \emph{100 Hz}};
      \node[virtualactor] (dup-1) at (-1,0) {Duplicater 1};
      \node[virtualactor] (dup-2) at (7,-2) {Duplicater 2};
      \node[virtualactor] (cs) at (3.5,0) {Controlled \\ splitter};
      \node[virtualactor] (cj) at (10.5,0) {Controlled \\ joiner};
      \node[decider] (b) at (2,-2) {B};
      \node[actor] (c) at (6.25,0.75) {C$_1$};
      \node (dots1) at (7,0.75) {$\dots$};
      \node[actor] (c2) at (7.75,0.75) {C$_{n_1}$};
      \node[actor] (d) at (6.25,0) {D$_1$};
      \node (dots2) at (7,0) {$\dots$};
      \node[actor] (d2) at (7.75,0) {D$_{n_2}$};
      \node[actor] (e) at (6.25,-0.75) {E$_1$};
      \node (dots3) at (7,-0.75) {$\dots$};
      \node[actor] (e2) at (7.75,-0.75) {E$_{n_3}$};
      \node[actor] (f) at (10.5,-2) {F \\ \emph{100 Hz}};
      \draw[arc] (a) to (dup-1);
      \draw[arc] (cj) to (f);
      \draw[arc] (dup-1) to (b);
      \draw[arc] (dup-1) to (cs);
      \draw[arc] (cs) to node[sloped,fill=white,pos=0.35] {$m_1$} node[very near end, sloped, above] {\textcolor{red}{2}} (c);
      \draw[arc] (cs) to node[sloped,fill=white,pos=0.35] {$m_2$} (d);
      \draw[arc] (cs) to node[sloped,fill=white,pos=0.35] {$m_3$} (e);
      \draw[arc] (c2) to node[sloped,fill=white,pos=0.6] {$m_1$} (cj);
      \draw[arc] (d2) to node[sloped,fill=white,pos=0.6] {$m_2$} (cj);
      \draw[arc] (e2) to node[sloped,fill=white,pos=0.6] {$m_3$} (cj);
      \draw[arc,dashed] (b) to (dup-2);
      \draw[arc,dashed] (dup-2) to (cj);
      \draw[arc,dashed] (dup-2) to (cs);
    \end{tikzpicture}
  \end{subfigure}
  \begin{subfigure}{\textwidth}
    \centering
    \caption{The sum of the assigned values of the production rates/consumption rates of the controlled splitter/controlled joiner is not equal to 1. $m_1$ and $m_2$ are assigned to 1 and $m_3$ to 0.}
    \label{fig:control-area-5}
    \begin{tikzpicture}
      \node[actor] (a) at (-1,-2) {A \\ \emph{100 Hz}};
      \node[virtualactor] (dup-1) at (-1,0) {Duplicater 1};
      \node[virtualactor] (dup-2) at (7,-2) {Duplicater 2};
      \node[virtualactor] (cs) at (3.5,0) {Controlled \\ splitter};
      \node[virtualactor] (cj) at (10.5,0) {Controlled \\ joiner};
      \node[decider] (b) at (2,-2) {B};
      \node[actor] (c) at (6.25,0.75) {C$_1$};
      \node (dots1) at (7,0.75) {$\dots$};
      \node[actor] (c2) at (7.75,0.75) {C$_{n_1}$};
      \node[actor] (d) at (6.25,0) {D$_1$};
      \node (dots2) at (7,0) {$\dots$};
      \node[actor] (d2) at (7.75,0) {D$_{n_2}$};
      \node[actor] (e) at (6.25,-0.75) {E$_1$};
      \node (dots3) at (7,-0.75) {$\dots$};
      \node[actor] (e2) at (7.75,-0.75) {E$_{n_3}$};
      \node[actor] (f) at (10.5,-2) {F \\ \emph{100 Hz}};
      \draw[arc] (a) to (dup-1);
      \draw[arc] (cj) to (f);
      \draw[arc] (dup-1) to (b);
      \draw[arc] (dup-1) to (cs);
      \draw[arc] (cs) to node[near start, sloped, above] {\textcolor{red}{1}} (c);
      \draw[arc] (cs) to node[near start, sloped, above] {\textcolor{red}{1}} (d);
      \draw[arc] (cs) to node[near start, sloped, above] {\textcolor{red}{0}} (e);
      \draw[arc] (c2) to node[near end, sloped, above] {\textcolor{red}{1}} (cj);
      \draw[arc] (d2) to node[near end, sloped, above] {\textcolor{red}{1}} (cj);
      \draw[arc] (e2) to node[near end, sloped, above] {\textcolor{red}{0}} (cj);
      \draw[arc,dashed] (b) to (dup-2);
      \draw[arc,dashed] (dup-2) to (cj);
      \draw[arc,dashed] (dup-2) to (cs);
    \end{tikzpicture}
  \end{subfigure}
  \caption{RMDF specifications that do not meet the required restrictions presented in \cref{prop:restrictions-control-areas}.}
\end{figure*}

\begin{definition}[Mode-coherent RMDF specification]
  Let $G$ be an RMDF specification. $G$ is mode-coherent if it satisfies the restrictions of \cref{prop:restrictions-control-areas}.
\end{definition}

\subsection{Consistency and Liveness Analysis of an RMDF Specification}

The consistency and liveness analysis of an RMDF specification is based on the static analysis of the exhaustive set of conditional execution branches of the RMDF specification. An RMDF specification with a single conditional execution branch can be analyzed as a routed PolyGraph specification. For instance, \cref{fig:yield-polygraph-specifications} presents the PolyGraph specifications obtained from an RMDF specification.

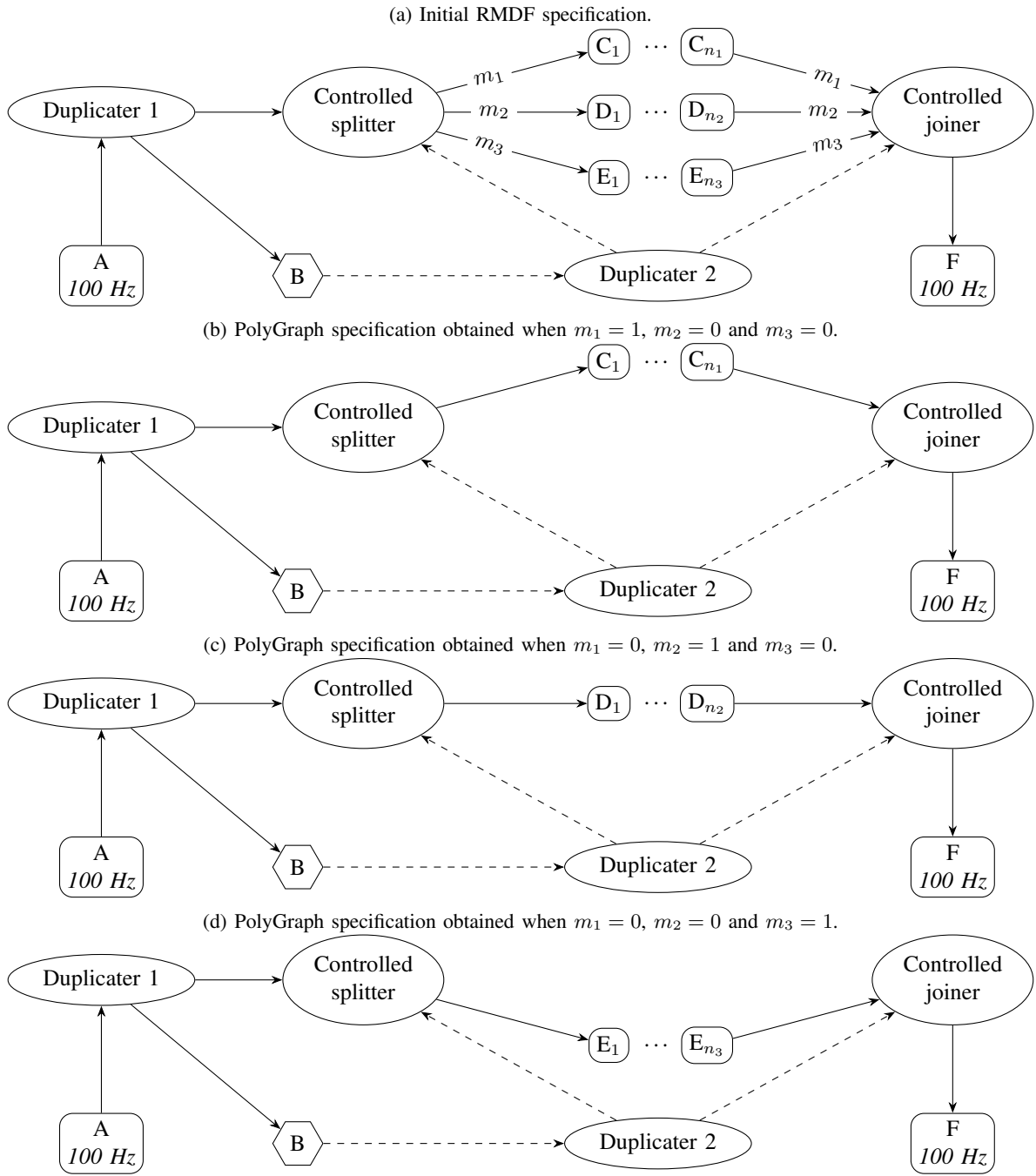
\begin{figure*}[htbp]
  \centering
  \begin{subfigure}{\textwidth}
    \centering
    \caption{Initial RMDF specification.}
    \begin{tikzpicture}
      \node[actor] (a) at (-1,-2.5) {A \\ \emph{100 Hz}};
      \node[virtualactor] (dup-1) at (-1,0) {Duplicater 1};
      \node[virtualactor] (dup-2) at (7.5,-2.5) {Duplicater 2};
      \node[virtualactor] (cs) at (3,0) {Controlled \\ splitter};
      \node[virtualactor] (cj) at (12,0) {Controlled \\ joiner};
      \node[decider] (b) at (2,-2.5) {B};
      \node[actor] (c1) at (6.75,1) {C$_1$};
      \node (ci) at (7.5,1) {\dots};
      \node[actor] (cl) at (8.25,1) {C$_{n_1}$};
      \node[actor] (d1) at (6.75,0) {D$_1$};
      \node (di) at (7.5,0) {\dots};
      \node[actor] (dm) at (8.25,0) {D$_{n_2}$};
      \node[actor] (e1) at (6.75,-1) {E$_1$};
      \node (ei) at (7.5,-1) {\dots};
      \node[actor] (en) at (8.25,-1) {E$_{n_3}$};
      \node[actor] (f) at (12,-2.5) {F \\ \emph{100 Hz}};
      \draw[arc] (a) to (dup-1);
      \draw[arc] (cj) to (f);
      \draw[arc] (dup-1) to (cs);
      \draw[arc] (dup-1) to (b);
      \draw[arc] (cs) to node[sloped,fill=white,pos=0.35] {$m_1$} (c1);
      \draw[arc] (cs) to node[sloped,fill=white,pos=0.35] {$m_2$} (d1);
      \draw[arc] (cs) to node[sloped,fill=white,pos=0.35] {$m_3$} (e1);
      \draw[arc] (en) to node[sloped,fill=white,pos=0.65] {$m_3$} (cj);
      \draw[arc] (dm) to node[sloped,fill=white,pos=0.65] {$m_2$} (cj);
      \draw[arc] (cl) to node[sloped,fill=white,pos=0.65] {$m_1$} (cj);
      \draw[arc,dashed] (b) to (dup-2);
      \draw[arc,dashed] (dup-2) to (cj);
      \draw[arc,dashed] (dup-2) to (cs);
    \end{tikzpicture}
  \end{subfigure}
  \begin{subfigure}{\textwidth}
    \centering
    \caption{PolyGraph specification obtained when $m_1 = 1$, $m_2 = 0$ and $m_3 = 0$.}
    \begin{tikzpicture}
      \node[actor] (a) at (-1,-2.5) {A \\ \emph{100 Hz}};
      \node[virtualactor] (dup-1) at (-1,0) {Duplicater 1};
      \node[virtualactor] (dup-2) at (7.5,-2.5) {Duplicater 2};
      \node[virtualactor] (cs) at (3,0) {Controlled \\ splitter};
      \node[virtualactor] (cj) at (12,0) {Controlled \\ joiner};
      \node[decider] (b) at (2,-2.5) {B};
      \node[actor] (c1) at (6.75,1) {C$_1$};
      \node (ci) at (7.5,1) {\dots};
      \node[actor] (cl) at (8.25,1) {C$_{n_1}$};
      \node[actor] (f) at (12,-2.5) {F \\ \emph{100 Hz}};
      \draw[arc] (a) to (dup-1);
      \draw[arc] (cj) to (f);
      \draw[arc] (dup-1) to (cs);
      \draw[arc] (dup-1) to (b);
      \draw[arc] (cs) to (c1);
      \draw[arc] (cl) to (cj);
      \draw[arc,dashed] (b) to (dup-2);
      \draw[arc,dashed] (dup-2) to (cj);
      \draw[arc,dashed] (dup-2) to (cs);
    \end{tikzpicture}
  \end{subfigure}
  \begin{subfigure}{\textwidth}
    \centering
    \caption{PolyGraph specification obtained when $m_1 = 0$, $m_2 = 1$ and $m_3 = 0$.}
    \begin{tikzpicture}
      \node[actor] (a) at (-1,-2.5) {A \\ \emph{100 Hz}};
      \node[virtualactor] (dup-1) at (-1,0) {Duplicater 1};
      \node[virtualactor] (dup-2) at (7.5,-2.5) {Duplicater 2};
      \node[virtualactor] (cs) at (3,0) {Controlled \\ splitter};
      \node[virtualactor] (cj) at (12,0) {Controlled \\ joiner};
      \node[decider] (b) at (2,-2.5) {B};
      \node[actor] (d1) at (6.75,0) {D$_1$};
      \node (di) at (7.5,0) {\dots};
      \node[actor] (dm) at (8.25,0) {D$_{n_2}$};
      \node[actor] (f) at (12,-2.5) {F \\ \emph{100 Hz}};
      \draw[arc] (a) to (dup-1);
      \draw[arc] (cj) to (f);
      \draw[arc] (dup-1) to (cs);
      \draw[arc] (dup-1) to (b);
      \draw[arc] (cs) to (d1);
      \draw[arc] (dm) to (cj);
      \draw[arc,dashed] (b) to (dup-2);
      \draw[arc,dashed] (dup-2) to (cj);
      \draw[arc,dashed] (dup-2) to (cs);
    \end{tikzpicture}
  \end{subfigure}
  \begin{subfigure}{\textwidth}
    \centering
    \caption{PolyGraph specification obtained when $m_1 = 0$, $m_2 = 0$ and $m_3 = 1$.}
    \begin{tikzpicture}
      \node[actor] (a) at (-1,-2.5) {A \\ \emph{100 Hz}};
      \node[virtualactor] (dup-1) at (-1,0) {Duplicater 1};
      \node[virtualactor] (dup-2) at (7.5,-2.5) {Duplicater 2};
      \node[virtualactor] (cs) at (3,0) {Controlled \\ splitter};
      \node[virtualactor] (cj) at (12,0) {Controlled \\ joiner};
      \node[decider] (b) at (2,-2.5) {B};
      \node[actor] (e1) at (6.75,-1) {E$_1$};
      \node (ei) at (7.5,-1) {\dots};
      \node[actor] (en) at (8.25,-1) {E$_{n_3}$};
      \node[actor] (f) at (12,-2.5) {F \\ \emph{100 Hz}};
      \draw[arc] (a) to (dup-1);
      \draw[arc] (cj) to (f);
      \draw[arc] (dup-1) to (cs);
      \draw[arc] (dup-1) to (b);
      \draw[arc] (cs) to (e1);
      \draw[arc] (en) to (cj);
      \draw[arc,dashed] (b) to (dup-2);
      \draw[arc,dashed] (dup-2) to (cj);
      \draw[arc,dashed] (dup-2) to (cs);
    \end{tikzpicture}
  \end{subfigure}
  \caption{An illustration of the PolyGraph specification obtained from an RMDF specification. All PolyGraph specifications are obtained by assigning values to $m_1$, $m_2$ and $m_3$.}
  \label{fig:yield-polygraph-specifications}
\end{figure*}

\begin{theorem}[Consistent and live RMDF specification]
  \label{theorem-1}
  Let $G = (V, E, M)$ be a mode-coherent RMDF specification. $G$ is consistent if the PolyGraph specification yield when parametric rates are assigned a value is consistent. $G$ is consistent and live if the PolyGraph specifications yield when parametric rates are assigned a value is consistent and live.
\end{theorem}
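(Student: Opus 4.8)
The plan is to reduce the consistency and liveness of the mode-coherent specification $G$ to those of the finite family $\{ G_k \}_{M_k \in M}$ of PolyGraph specifications obtained by fixing each admissible mode, exploiting the fact that controlled splitters and controlled joiners process exactly one token per job. First I would make the \enquote{yield} operation precise. Fixing a mode $M_k \in M$ assigns every parametric rate a value in $\{0,1\}$ with exactly one outgoing (resp. incoming) rate of each controlled splitter (resp. joiner) equal to $1$ (\cref{prop:restrictions-control-areas}, restrictions 4 and 5). By restriction 1 every actor of a control area is conditioned by a single mode, so $M_k$ unambiguously marks each branch as active or inactive: the controlled splitter and joiner degenerate into an ordinary splitter and joiner routing to the single active branch, the inactive branches carry rate-$0$ channels and never fire, and what remains is a routed PolyGraph specification. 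By \cref{algo:removing-routing-actors} this is semantically equivalent to a plain PolyGraph specification $G_k$, whose consistency and liveness are defined by the PolyGraph theory.

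Next I would separate the mode-independent structure. By restriction 2, each control area is an isolated disjoint directed mesh whose only links to the rest of $G$ run through its controlled splitter (source) and controlled joiner (sink). Consequently the subgraph lying outside all control areas, together with the mode deciders, duplicaters, controlled splitters and controlled joiners themselves, is common to every $G_k$. The key invariant I would establish is that each controlled splitter and controlled joiner has the \emph{same} repetition count in every $G_k$: because each fires exactly one token per job (restrictions 4 and 5), its firing count is dictated solely by the common subgraph and is therefore independent of the active branch. This decouples the boundary of every control area from the mode choice.

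The consistency statement then follows. I would fix the repetition vector on the common subgraph---well defined because that subgraph is identical in all $G_k$ and each $G_k$ is consistent---and note that in mode $M_k$ the active branch inherits the repetition counts of $G_k$ while the inactive branches do not fire. Since each $G_k$ admits a bounded-memory execution and any single execution of a control area realises exactly one mode, the buffer occupancy inside a control area is bounded by the maximum over $k$ of the corresponding bound in $G_k$, and the buffers of the common subgraph are bounded by the common repetition vector; the maximum over the finitely many modes gives a global memory bound, so $G$ is consistent. For the second statement I would additionally compose the deadlock-free schedules: the common subgraph is scheduled by its live schedule, and inside each control area the active branch is scheduled by the live schedule of $G_k$. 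Restriction 2 forbids any channel crossing a control-area boundary except through the splitter/joiner, so these schedules cannot form a circular wait, and restriction 3 forces the timed actors of a branch to share a frequency, so the timing constraints carried over from $G_k$ stay simultaneously satisfiable.

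The hardest part will be the transition between modes across successive hyperperiods. I must show that when the mode decider selects a branch different from the previous hyperperiod, the control area returns to a clean state---namely, the only tokens remaining on the branch channels are exactly the initial tokens prescribed by \cref{algo:removing-routing-actors}---so that the incoming mode starts from the configuration assumed by its $G_k$. This should follow from consistency and liveness of each $G_k$ (an active branch completes a full, balanced cycle returning to its initial marking) together with the mode-independence of the splitter/joiner repetition counts (the control area absorbs and emits the same number of boundary tokens per hyperperiod whichever branch fired). Proving this invariant rigorously, and that it is preserved under arbitrary sequences of mode choices, is the technical crux of the argument.
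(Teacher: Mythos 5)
Your proposal is correct and follows essentially the same route as the paper: both reduce the analysis of $G$ to the family of PolyGraph specifications obtained by fixing each mode, and both argue that an execution of $G$ is a sequence of per-mode executions, each of which returns the system to its initial state so that the next parameter assignment starts from that same state. The paper's own proof is only a short sketch of this decomposition, and the \enquote{clean state between successive modes} invariant you single out as the technical crux is precisely the step the paper asserts without further justification (deferring the per-mode consistency and liveness checks to the PolyGraph literature), so your more detailed treatment elaborates rather than diverges from the published argument.
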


\begin{proof}[Proof of \cref{theorem-1}]
  Let $G = (V, E, M)$ be a mode-coherent RMDF specification. A symbolic execution of $G$ is a sequence of symbolic execution of PolyGraph specifications. Indeed, executions of mode deciders assign values to parameters, transforming an RMDF specification into a PolyGraph specification. If this latter is consistent and live, $G$ returns to its initial state without deadlocking. Note that the mode-coherence property of $G$ ensures that the execution of a control area is consistent and live. Once back to the initial state, the process repeats with another assignation of parameters, possibly yielding a different PolyGraph specification because another conditional execution branch can be taken. $G$ also returns to its initial state. It remains to verify that all PolyGraph specifications are consistent and live. Such techniques are defined in~\cite{dubrulle_data_2019}.
\end{proof}

\section{Timing Analysis of an RMDF Specification}

\label{sec:timing-analysis}

\subsection{Pre-processing of an RMDF Specification}

The RMDF model, as it is based on the routed PolyGraph model, allows system designers to define an arbitrary number of initial tokens on each channel. This feature is helpful for modeling data in memory, such as the initial state of a system, which is already available when the system starts. Therefore, an actor may have all its input channels filled with initial tokens, meaning that it can start its execution immediately. Such actor can undergo \enquote{offline execution} and be \emph{pre-processed} before the system starts without waiting for the first data produced by sensors. In other words, \enquote{offline jobs} may be executed to reduce the computational load of the system when it starts. It also eases the computation of the system's timing constraints, as we will see further in this paper.

The pre-processing of an RMDF specification is detailed in \cref{algo:pre-processing} (located on \cpageref{algo:pre-processing}), which in turn relies on \cref{algo:convertion-rational-rate}. This latter converts a rational rate into the number of tokens produced or consumed at a specific job instance. Indeed, a rational rate of a PolyGraph specification only gives a number of tokens produced or consumed over a set of consecutive jobs, but not at every job instance.

\cref{algo:pre-processing} pre-processes as many actors as possible using the initial tokens available in the channels. Actors with sufficient initial tokens in all their input channels are pre-processed. As each actor pre-processing produces tokens in their output channels, they may produce sufficient tokens for their successors to be pre-processed as well. Thus, each pre-processing must be followed by a check of the complete specification. \cref{algo:pre-processing} terminates when no jobs can be executed without depending on at least one job instance from a sensor (which jobs are always at runtime). In other words, all actors (except sensors) have a direct or indirect data dependency.

\begin{table}[htbp]
  \centering
  \caption{Notations and variables used in \cref{algo:pre-processing}.}
  \label{tab:notations-variables}
  \resizebox{\columnwidth}{!}{
    \begin{tabular}{|c|c|}
      \toprule
      \textbf{Notation/Variable} & \textbf{Description} \\
      \midrule
      $tkn\_state$ & \makecell{A data structure that stores tokens produced by the actors. \\ It is an array of queues with a FIFO policy. \\ The length of this array is equal to the number of channels \\ of the RMDF specification being pre-processed.} \\
      \midrule
      $ctrl\_tkn.channel$ & \makecell{The channel on which the controlled splitter/joiner that \\ consume control token will produce/consume its next token.} \\
      \bottomrule
    \end{tabular}
  }
\end{table}

\begin{algorithm}[htbp]
  \caption{Conversion of a rational rate to the number of tokens produced or consumed at the job level (Algorithm 2 in~\cite{dubrulle_data_2019}).}
  \label{algo:convertion-rational-rate}
  \KwIn{The $n$-th job instance of an actor $v \in V$ of a PolyGraph specification $G = (V, E)$, a channel $c \in E$ from/to which $v$ consumes/produces tokens with a rate $\gamma = \frac{p}{q} \in \mathbb{Q}^*$ and with $[c] \in \mathbb{Q}$ initial tokens.}
  \KwOut{The number of tokens consumed/produced by the $n$-th job of $v$ from/to $c$.}
  $r \leftarrow [c] - \lfloor [c] \rfloor$ \;
  \eIf{$\gamma > 0$}
  {
    \KwResult{$\lfloor n \cdot \gamma + r \rfloor - \lfloor (n - 1) \cdot \gamma + r \rfloor$}
  }
  {
    \KwResult{$\lceil n \cdot \gamma - r \rceil - \lceil (n - 1) \cdot \gamma - r \rceil$}
  }
\end{algorithm}

\begin{algorithm*}[htbp]
  \caption{Pre-processing of an RMDF specification.}
  \label{algo:pre-processing}
  \KwIn{An RMDF specification $G = (V, E, M)$.}
  \KwOut{The RMDF specification where all jobs have a direct or indirect successor with at least one sensor's job.}
  \Def {consume($ts, v, c, n$)}{
    \lForAll{$i \in [0, \dots, n-1]$}{$consumed\_token.get(i) \gets ts.get(v.c).dequeue()$}
    \KwResult{$consumed\_token$}
  }
  \Def {produce($ts, v, c, tkns$)}{
    \lForAll{$t \in tkns$}{$ts.get(v.c).enqueue(t)$}
  }
  $should\_restart \leftarrow \mathbf{True}$ ; $ execute\_actor \leftarrow \mathbf{True}$ ; $tkn\_state \gets []$ \;
  \lForAll{$v \in V$}{\textbf{if}~$v.is\_ctrl\_splitter~\mathbf{or}~v.is\_ctrl\_joiner$~\textbf{then}~$v.has\_consumed\_ctrl\_tkn \gets \mathbf{False}$}
  \While{$should\_restart$}
  {
    $should\_restart \gets \mathbf{False}$ \;
    \For{$v \in V$}
    {
      \If{$v.is\_mode\_decider~\mathbf{and}~ts.get(v.input\_channel.get(0)).size() \geq 1$}
      {
        \label{algo:line-1}
        $tkn \gets consume(ts, v, v.input\_channel.get(0), 1)$ \;
        $ctrl\_tkn.channel \gets v.decide\_mode(tkn)$ \;
        $produce(ts, v, v.ctrl\_channel, ctrl\_tkn)$ \;
        $should\_restart \gets \mathbf{True}$ \;
        \Continue
      }
      \If{$v.is\_ctrl\_splitter~\mathbf{or}~v.is\_ctrl\_joiner$}
      {
        \eIf{$v.has\_consumed\_ctrl\_tkn = \mathbf{False}~\textbf{and}~ts.get(v.ctrl\_channel).size() \geq 1$}
        {
          $ctrl\_tkn \gets consume(ts, v, v.ctrl\_channel, 1)$ \;
          $v.has\_consumed\_ctrl\_tkn \gets \mathbf{True}$ \;
          $should\_restart \gets \mathbf{True}$ \;
          \Continue
        }
        {
          \If{$ts.get(v.ctrl\_tkn.channel).size() \geq 1$}
          {
            \eIf{$v.is\_controlled\_splitter$}
            {
              $tkn \gets consume(ts, v, v.input\_channel.get(0), 1)$ \;
              $produce(ts, v, v.output\_channel.get(ctrl\_tkn.channel), tkn)$~;\\
              $v.has\_consumed\_ctrl\_tkn \gets \mathbf{False}$ \;
              $should\_restart \gets \mathbf{True}$ \;
              \Continue
            }
            {
              $tkn \gets consume(ts, v, v.input\_channel.get(ctrl\_tkn.channel), 1)$ \;
              $produce(ts, v, v.output\_channel.get(0), tkn)$~;\\
              $v.has\_consumed\_ctrl\_tkn \gets \mathbf{False}$ \;
              $should\_restart \gets \mathbf{True}$ \;
              \Continue
              \label{algo:line-2}
            }
          }
        }
      }
      \For{$c \in v.input\_channels$}
      {
        \lIf{$ts.get(c).size() < \cref{algo:convertion-rational-rate}(v.job\_instance, c, c.cons\_rate, [c])$}{$execute\_actor \gets \mathbf{False}$}
      }
      \If{$execute\_actor = \mathbf{True}$}
      {
        \For{$c \in v.input\_channels$}
        {
          $n \gets \cref{algo:convertion-rational-rate}(v.job\_instance, c, c.cons\_rate, [c])$ \;
          $v.consumed\_tkns.get(c).append(consume(ts, v, c, n))$~;\\
        }
        $v.produced\_tkns \gets execute\_actor(v)$ \;
        \lFor{$c \in v.output\_channels$}
        {
          $produce(ts, v, c, v.produced\_tkns.get(c))$
        }
        $v.job\_instance \gets v.job\_instance + 1$ \;
        $should\_restart \gets \mathbf{True}$ \;
      }
    }
  }
\end{algorithm*}

A consistent and live RMDF specification returns to its initial state after a hyperperiod. We claim that the timing constraints of a consistent, live and pre-processed RMDF specification (i.e., pre-processed with \cref{algo:pre-processing}) are cyclic over the first hyperperiod if the sensors and actuators of the specification have a frequency. Indeed, after pre-processing, all actors but the sensors have a direct or indirect data dependency with at least one sensor. Sensors are timed actors, so they release jobs periodically after the start of the execution. Consequently, they also periodically release their successors' jobs, ensuring cyclic timing constraints. The same reasoning applies to the actuators, which yield periodic deadline constraints that are propagated to their predecessor through data dependencies.

\subsection{Propagation of Timing Constraints}

The arithmetic relation of~\cite{roumage_static_2024}, tailored initially for a PolyGraph specification, can be applied to the RMDF specification. To do that, we must define a similar notion as \emph{well-defined PolyGraph specification} of~\cite{roumage_static_2024} for an RMDF specification.

\begin{definition}[Well-defined PolyGraph/RMDF specification]
  A well-defined PolyGraph/RMDF specification is a weakly connected graph\footnote{A directed graph is weakly connected when its undirected induced graph is connected~\cite{honorat_scheduling_2020}.} that is both consistent and live. Actors without predecessors/successors (i.e., the sensors/actuators) are timed actors, meaning they have a frequency constraint. All actors within the specification have both a BCET and a WCET, such that the BCET is less than or equal to the WCET. A well-defined PolyGraph/RMDF specification is also \emph{pre-processed}, meaning that it has been processed with \cref{algo:pre-processing}. Note that for a PolyGraph specification, \cref{algo:line-1} to \cref{algo:line-2} of \cref{algo:pre-processing} are not executed. 
\end{definition}

We remind the main result of~\cite{roumage_static_2024} that allows the derivation of the execution windows of a PolyGraph specification. This result will be extended to an RMDF specification.

\begin{theorem}[Derivation of the execution windows of a PolyGraph specification - Theorem 1 of~\cite{roumage_static_2024}]
  \label{thm:execution-windows-polygraph}
  Let $G = (V, E)$ be a well-defined PolyGraph specification, let $v_j, v_k \in V$ be two actors of $G$, and let $c_i \in E$ be a channel of $G$ from $v_j$ to $v_k$. We define $G_\Gamma = (\gamma_{ij}) \in \mathbb{Q}^{|E| \times |V|}$ to be the topology matrix of $G$, and $r_i = [c_i] - \lfloor [c_i] \rfloor$ with $\lfloor [c_i] \rfloor$ being the floor function applied to $[c_i]$. We define $\texttt{period}_v$ and $\texttt{phase}_v$ to be the period and phase of an actor $v \in V$ if $v$ is a timed actor. Let $bcet_v/wcet_v$ be the BCET/WCET of an actor $v \in V$. Then, the following holds:
  \begin{enumerate}
    \item Let $v_k \in V$ be a non-timed actor of $G$. Then, assuming $release_{v_j}^n$ returns the release of the $n$-th job of an actor $v_j \in V$, the release of the $p$-th job of $v_k$ is computed as follows:
    \begin{equation*}
      \label{eq:release-non-timed-actor}
        \begin{split}
          release_{v_k}^p = \max\limits_{\substack{i \in [0, |E|] \\ j \in [0, |V|] \\ \text{st } \gamma_{ij} < 0}} & release_{v_j}^{\alpha_1 (p, i, j, k)} + bcet_{v_j} \\ & + \bigl( p - \beta_1 (p, i, j, k) \bigr) \cdot bcet_{v_k}
        \end{split}
    \end{equation*}
    with
    $$\alpha_1 (p, i, j, k) =  \bigl \lceil \frac{\lceil p \cdot |\gamma_{ik}| - r_i \rceil - [c_i]}{\gamma_{ij}} \bigr \rceil $$
    and
    $$\beta_1 (p, i, j, k) = 1 + \Bigl \lfloor \frac{\lfloor \bigl( \alpha_1 (p, i, j, k) - 1 \bigr) \cdot \gamma_{ij} + [c_i] \rfloor + r_i}{|\gamma_{ik}|} \Bigr \rfloor$$
    \item Let $v_k \in V$ be a timed actor of $G$. Then, by denoting $\overline{release}_{v_k}^p$ the release computed by \cref{eq:release-non-timed-actor}, the release of the $p$-th job of $v_k$ is computed as follows:
    {\small
    \begin{equation*}
      \label{eq:release-timed-actor}
      release_{v_k}^p = \max \Bigl( \overline{release}_{v_k}^p~,~\texttt{period}_{v_k} \cdot (p - 1) + \texttt{phase}_{v_k} \Bigr)
    \end{equation*}
    }
    \item Let $v_j \in V$ be a non-timed actor of $G$. Then, assuming $deadline_{v_k}^p$ returns the deadline of the $p$-th job of an actor $v_k \in V$, the deadline of the $n$-th job of $v_j$ is computed as follows:
    \begin{equation*}
      \label{eq:deadline-non-timed-actor}
        \begin{split}
          deadline_{v_j}^n = \min\limits_{\substack{i \in [0, |E|] \\ k \in [0, |V|] \\ \text{st } \gamma_{ik} > 0}} & deadline_{v_k}^{\alpha_2 (n, i, j, k)} - wcet_{v_k} \\ & - \Bigl( \beta_2 (n, i, j, k) - n \Bigr) \cdot wcet_{v_j}
        \end{split}
    \end{equation*}
    with
    \begin{equation*}
        \alpha_2 (n, i, j, k) = 1 + \Bigl \lfloor \frac{\lfloor (n - 1) \cdot \gamma_{ij} + [c_i] \rfloor + r_i}{|\gamma_{ik}|} \Bigr \rfloor
    \end{equation*}
    and
    \begin{equation*}
        \beta_2 (n, i, j, k) = \bigl \lceil \frac{\lceil \alpha_2 (n, i, j, k) \cdot |\gamma_{ik}| - r_i \rceil - [c_i]}{\gamma_{ij}} \bigr \rceil
    \end{equation*}
    \item Let $v_j \in V$ be a timed actor of $G$. Then, by denoting $\overline{deadline}_{v_j}^n$ the deadline computed by \cref{eq:deadline-non-timed-actor}, the deadline of the $n$-th job of $v_j$ is computed as follows:
    {\small
    \begin{equation*}
      \label{eq:deadline-timed-actor}
      deadline_{v_j}^n = \min \Bigl( \overline{deadline}_{v_j}^n~,~\texttt{period}_{v_j} \cdot n + \texttt{phase}_{v_j} \Bigr)
    \end{equation*}
    }
  \end{enumerate}
\end{theorem}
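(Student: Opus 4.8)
The plan is to obtain the two release formulas (cases~1 and~2) directly from the token-level precedence constraints of the rational-rate dataflow semantics, and then to derive the two deadline formulas (cases~3 and~4) by a time-reversal argument on the graph with all channels inverted. The basic tool is \cref{algo:convertion-rational-rate}, which returns the number of tokens produced or consumed by a single job; summing its output over the first $N$ jobs telescopes to a closed form for the cumulative production and consumption on a channel $c_i$. Concretely, once $v_j$ has completed its first $N$ jobs the number of tokens available on $c_i$ is $\lfloor N \gamma_{ij} + [c_i] \rfloor$ (the fractional part $r_i$ of $[c_i]$ being carried through the floor), and the $p$-th job of $v_k$ can fire only once this availability covers the cumulative demand of its first $p$ consumptions. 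I would establish these two cumulative identities as a preliminary lemma.

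Second, I would turn the availability-versus-demand inequality into an explicit precedence. The $p$-th job of $v_k$ becomes data-enabled through $c_i$ exactly when $v_j$ has finished the job that pushes the cumulative availability past the threshold required by the $p$-th consumption; solving the inequality $\lfloor N \gamma_{ij} + [c_i] \rfloor \geq (\text{demand of job } p)$ for the least integer $N$ yields precisely $\alpha_1(p,i,j,k)$. Since a token is not available before its producing job ends, and the earliest end of job $\alpha_1$ is $release_{v_j}^{\alpha_1(p,i,j,k)} + bcet_{v_j}$, this gives the first summand. The second summand $\bigl(p - \beta_1(p,i,j,k)\bigr)\cdot bcet_{v_k}$ encodes self-precedence: jobs of $v_k$ execute in order and do not overlap, so all consumer jobs unlocked simultaneously by producer job $\alpha_1$ must run back to back. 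I would show that $\beta_1(p,i,j,k)$ is the smallest consumer-job index whose enabling producer job is $\alpha_1$, so that exactly $p - \beta_1$ self-executions lie between job $\beta_1$ and job $p$ along the critical path through $c_i$. Taking the maximum over all incoming channels selects the binding precedence, and an induction over the job partial order---well founded because $G$ is well-defined, hence consistent, live, and periodic over a hyperperiod---shows the bound is both necessary and attained, i.e.\ it is the as-soon-as-possible release.

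Case~2 then follows by intersecting this data-driven release with the periodic release $\texttt{period}_{v_k}\cdot(p-1) + \texttt{phase}_{v_k}$ that a timed actor must respect; since the actor must honour both its precedences and its frequency, its release is the later of the two, giving the maximum. Cases~3 and~4 are the mirror image under time reversal: inverting every channel exchanges producers and consumers and turns as-soon-as-possible into as-late-as-possible, which swaps $bcet$ for $wcet$, floors for ceilings, minima for maxima, and releases for deadlines. Running the same cumulative-count and precedence analysis backward produces $\alpha_2$ and $\beta_2$ and the deadline formula, which is finally intersected with the periodic deadline $\texttt{period}_{v_j}\cdot n + \texttt{phase}_{v_j}$ for timed actors.

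The step I expect to be the main obstacle is the index algebra underlying $\alpha_1,\beta_1$ and their duals: proving that these nested floor/ceiling expressions pick out exactly the right producer and consumer jobs requires careful bookkeeping of the fractional offset $r_i$, and the boundary cases are delicate---in particular jobs that consume or produce zero tokens on a channel (so that the constraint through that channel is vacuous and must not bind), non-integer initial token counts $[c_i]$, and the off-by-one at the job on which a token crosses the integer threshold. The second delicate point is establishing tightness rather than mere validity of the max/min bounds, which is where consistency and liveness are needed to guarantee that the as-soon-as-possible and as-late-as-possible schedules exist and repeat cyclically over the hyperperiod.
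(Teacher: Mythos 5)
The paper does not actually prove this theorem: it is quoted verbatim from prior work and the text immediately following it states only that ``the interested reader can find the proof of \cref{thm:execution-windows-polygraph} in~\cite{roumage_static_2024}.'' There is therefore no in-paper proof to compare your attempt against, and your proposal has to be judged on its own merits.

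On those merits, your outline is the natural argument and is consistent with the structure of the formulas. The telescoping of \cref{algo:convertion-rational-rate} does give cumulative production $\lfloor N\gamma_{ij}+[c_i]\rfloor$ and cumulative demand $\lceil p|\gamma_{ik}|-r_i\rceil$, and solving availability $\geq$ demand for the least integer $N$ yields exactly $\alpha_1$; dually, $\beta_1$ is one plus the number of consumer jobs already enabled by the first $\alpha_1-1$ producer jobs, which is what the self-precedence term needs. Two caveats. First, what you have is a plan, not a proof: the entire content of the theorem lives in the index-algebra lemmas characterizing $\alpha_1,\beta_1,\alpha_2,\beta_2$, which you state but do not establish, and you correctly identify this as the main obstacle. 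Second, the deadline case is not a literal time reversal of the release case: inspecting the formulas, $\alpha_2$ has the shape of $\beta_1$ (first consumer job needing a token from producer job $n$) and $\beta_2$ has the shape of $\alpha_1$ (last producer job required by consumer job $\alpha_2$), so the two index functions swap roles rather than mirror one another, and the backward recursion composes them in the opposite order. Your ``invert every channel'' heuristic points in the right direction but would need to be made precise on exactly this point, together with the boundary behaviour when a job produces or consumes zero tokens on a channel (the constraint must then be vacuous rather than binding in the max/min).
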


The interested reader can find the proof of \cref{thm:execution-windows-polygraph} in~\cite{roumage_static_2024}.

\begin{theorem}[Derivation of the execution windows of an RMDF specification]
  \label{thm:execution-windows-rmdf}
  Let $G = (V, E, M)$ be an RMDF specification where all parameters are set to 1. Under the assumptions of \cref{thm:execution-windows-polygraph}, except that $G$ is a well-defined RMDF specification instead of a well-defined PolyGraph specification, the result of \cref{thm:execution-windows-polygraph} can be extended to $G$.
\end{theorem}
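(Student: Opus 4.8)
The plan is to reduce the all-parameters-set-to-one specification $G$ to an ordinary PolyGraph specification and then invoke \cref{thm:execution-windows-polygraph} directly. First I would note that once every parametric rate is fixed, the topology of $G$ is static: a controlled splitter whose output rates are all equal to $1$ copies each incoming data token to every output branch and thus behaves as a \emph{duplicater}, while a controlled joiner whose input rates are all equal to $1$ consumes one data token from every input branch before emitting on its output, so that it reduces to a (multi-input) \emph{joiner}. The mode decider has a single data input and a single output and is therefore structurally an ordinary actor; and because routing no longer depends on the \emph{value} carried by control tokens, every control channel can be reinterpreted as an ordinary data channel carrying one token per job. Hence $G$ becomes a \emph{routed PolyGraph specification}. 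Activating all branches simultaneously also yields the densest job set, which is why this configuration produces the most constraining execution windows.

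Next I would apply \cref{algo:removing-routing-actors} to $G$, which by the equivalence established in \cref{sec:routed-polygraph} produces a semantically equivalent PolyGraph specification $G' = (V', E')$ containing no routing actors. The key point is that this removal is token-exact: by construction of \cref{algo:rate-and-initial-tokens-production-sequence} and \cref{algo:rate-and-initial-tokens-consumption-sequence}, the rational rate and the initial tokens placed on each rewritten channel of $G'$ reproduce, job by job (via \cref{algo:convertion-rational-rate}), the very sequence of tokens that transited the corresponding channel of $G$. Consequently the job-level precedence relation---which job of a producer enables which job of a consumer---is identical in $G$ and $G'$.

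I would then verify that $G'$ is a \emph{well-defined PolyGraph specification}, so that \cref{thm:execution-windows-polygraph} applies. Consistency and liveness of $G$ transfer to $G'$ because \cref{algo:removing-routing-actors} preserves semantics; weak connectivity is preserved since the rewriting only contracts routing actors into their neighbours; the sensors and actuators of $G$ are the sources and sinks of $G'$ and remain timed actors with the same frequencies; the BCET/WCET bounds of the surviving actors are unchanged; and pre-processing has already been performed on $G$ by \cref{algo:pre-processing}. With $G'$ well-defined, \cref{thm:execution-windows-polygraph} yields the release and deadline of every job through its four cases and the index functions $\alpha_1, \beta_1, \alpha_2, \beta_2$. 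Since the token-level semantics of $G$ and $G'$ coincide, these execution windows are exactly the execution windows of $G$, which establishes the claim.

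The main obstacle I anticipate is the faithful treatment of the controlled joiner and of the control channels during the reduction. On the joiner side, one must verify that ``all inputs active'' genuinely corresponds to a plain joiner whose per-job consumption pattern is captured by the rates and initial tokens emitted by \cref{algo:rate-and-initial-tokens-production-sequence}, rather than to some merge with token loss. On the control side, one must confirm that reinterpreting each control channel as a rate-$1$ data channel keeps the graph within the PolyGraph rate constraints (nonzero rational rates) and, crucially, that the precedence added from the mode decider through the duplicater to the controlled splitter and controlled joiner introduces no spurious cycle that would break liveness. Once these two points are discharged, the remainder is a direct appeal to \cref{thm:execution-windows-polygraph}.
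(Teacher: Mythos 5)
Your proposal is correct and follows essentially the same route as the paper, whose entire proof is the one-sentence observation that an RMDF specification with all parameters set to $1$ can be analyzed as a well-defined PolyGraph specification, to which \cref{thm:execution-windows-polygraph} then applies. You simply make explicit the intermediate steps the paper leaves implicit (controlled splitter $\to$ duplicater, controlled joiner $\to$ plain multi-input actor, control channels $\to$ rate-$1$ data channels, then \cref{algo:removing-routing-actors} and the well-definedness check), which is a faithful elaboration rather than a different argument.
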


\begin{proof}[Proof of \cref{thm:execution-windows-rmdf}]
  An RMDF specification as defined in the assumptions of \cref{thm:execution-windows-polygraph}, i.e., with all its parameters set to 1, can be analyzed as a well-defined PolyGraph specification, which is the assumption of \cref{thm:execution-windows-polygraph}.
\end{proof}

\cref{thm:execution-windows-rmdf} extends the timing constraints propagation of a PolyGraph specification to an RMDF specification by considering all parameters to be set to 1. This implies a \emph{conservative approach} for the execution windows assignment of the controlled splitters and controlled joiners. Deadlines are set to the minimum of the deadlines of their output branches. Similarly, the releases are set to the maximum of the releases of their input branches.

\section{A Feasibility Test of an RMDF Specification}

\label{sec:feasibility-test}

The timing constraint propagation described in the previous section provides a timing assessment of an RMDF specification, serving as a feasibility test. Let $G = (V, E)$ be a well-defined RMDF specification and $G_{rv}(v)$ the repetition vector of $G$, that is, the number of executions of $v$ in a hyperperiod of $G$. A \emph{necessary} condition for the system to be feasible is to assert if the execution window length of all jobs is greater or equal to the WCET of the actor:

$$
  \forall v \in V, \forall i \in G_{rv} (v) : wcet_v \leq deadline_v^i - release_v^i
$$

The proposed timing assessment reduces the domain search space of dataflow specifications, as unfeasible RMDF specifications can be detected as early as possible. \cref{fig:design-flow} illustrates the design flow we proposed with RMDF.

\begin{figure*}[htbp]
  \centering
  \resizebox{\textwidth}{!}{
    \begin{tikzpicture}
      \node[tapeactor,fill=gray!50] (spec) at (-0.5,0) {Specifications of a \\ mode-dependent \\ system with relaxed \\ real-time constraints};
      \node[actor] (model-hw-independent) at (5,0) {RMDF specification \\ of that system};
      \node[actor] (static-analysis) at (5,-2) {Consistency and liveness test};
      \node[actor] (model-hw-dependent) at (5,-4) {RMDF specifications \\ with BCETs and WCETs};
      \node[tapeactor,fill=gray!50] (bcet-wcet) at (-0.5,-4) {BCETs and \\ WCETs};
      \node[diamond,align=center,draw] (analysis) at (11,-3) {Feasibility \\ test \\ (cf. \cref{sec:feasibility-test})};
      \node[actor] (timing-constraints) at (8,-6) {Computation of \\ timing constraints \\ at the job level};
      \node[cloud,align=center,draw,fill=gray!50] (toto) at (15.5,-7) {Scheduling \\ Monitoring \\ Sizing \\ Etc};
      \draw[arc] (model-hw-dependent) to[out=270,in=180] (timing-constraints);
      \draw[arc] (timing-constraints) to[out=0,in=270] (analysis);
      \draw[arc] (bcet-wcet) -- (model-hw-dependent);
      \draw[arc] (model-hw-independent) -- (static-analysis);
      \draw[arc] (static-analysis) -- (model-hw-dependent);
      \draw[arc] (analysis) to[out=90,in=0] node[pos=0.7,above] {No Valid: refinement} (model-hw-independent);
      \draw[arc] (spec) -- (model-hw-independent);
      \draw[arc] (analysis) to[out=0,in=90] node[midway,align=center,right] {Valid: further \\ analysis} (toto);
      \node[draw, rectangle, rounded corners, dashed, ultra thick, fit=(model-hw-independent) (analysis) (timing-constraints) (model-hw-dependent) (static-analysis)] {};
    \end{tikzpicture}
  }
  \caption{A design flow with RMDF. The central dotted rectangle is the workflow we proposed.}
  \label{fig:design-flow}
\end{figure*}
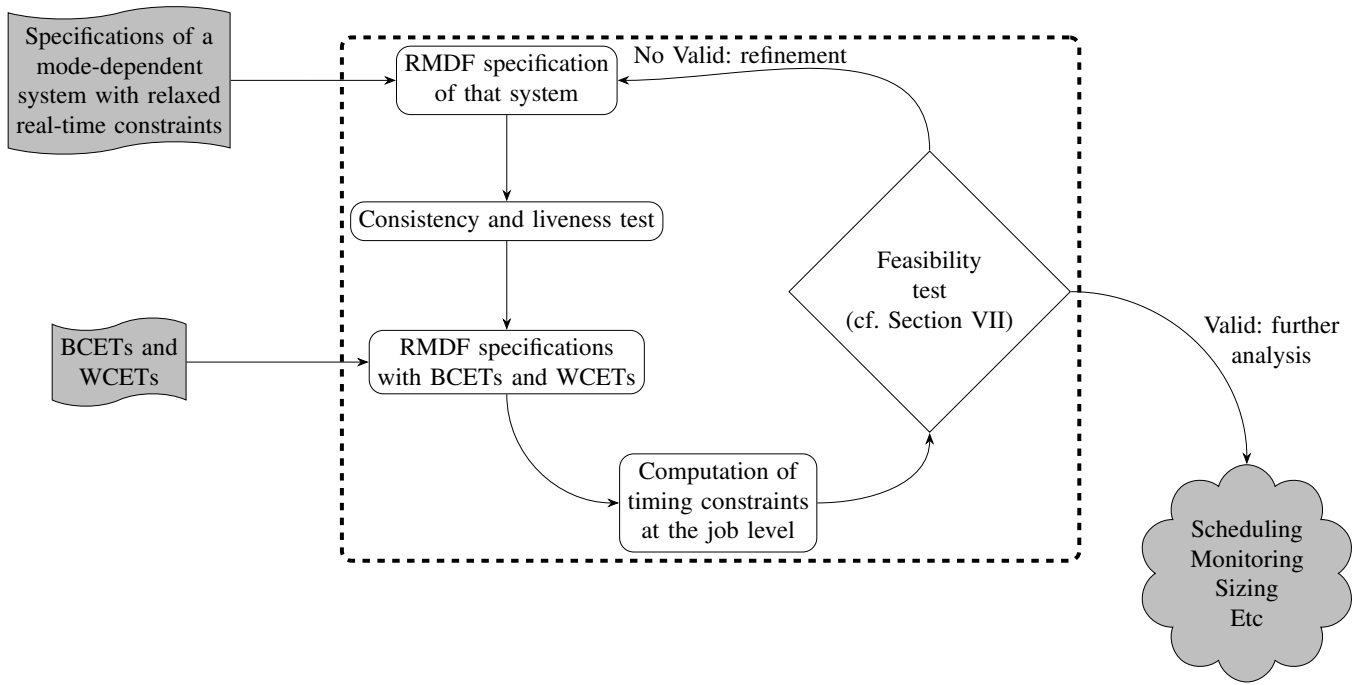

\section{Application to the Vision System of Ingenuity}

\label{sec:applications}

As a reminder, a dataflow model is a trade-off between its expressiveness and analyzability. In \cref{sec:rmdf}, we have created the RMDF model, which extends the expressiveness of PolyGraph while maintaining its consistency and liveness analyses as detailed in \cref{sec:static-analysis}. In \cref{sec:timing-analysis} and \cref{sec:feasibility-test}, the analyzability of RMDF was extended while maintaining its expressiveness. The current section applies the previous results to a dataflow specification of the vision processing system of Ingenuity.

\subsection{An Enhanced Model Fidelity with RMDF}

We have specified the vision processing system of the Ingenuity Mars helicopter~\cite{bayard_visionbased_2019} in \cref{fig:specification-ingenuity-vision-processing} with both the PolyGraph model (\cref{fig:specification-ingenuity-vision-processing-1}) and the RMDF model (\cref{fig:specification-ingenuity-vision-processing-2}). The vision processing system detects visual features on camera frames and either stores them as reference features (modeled as \emph{Pseudo landmarks}) or tracks them from one frame to another. After a filtering procedure, the tracked visual features are matched with pseudo landmarks, enabling the computation of a visual drift to enhance position estimation.

New pseudo landmarks are generated if certain conditions are not met, and some of those conditions depend on runtime data. Thus, we have to simplify the pseudo landmarks generation procedure in the PolyGraph specification and assume that 1 image over 10 is used to generate new pseudo landmarks\footnote{In the real system, at least 1 image over 10 is used to generate pseudo landmarks. This condition is not the only one used in the real system. However, it is the only one that can be modeled statically. All other conditions depend on runtime data~\cite{bayard_visionbased_2019}.}.

The expressiveness of the RMDF model allows us to model the pseudo landmark generation with a mode decider actor (which is \emph{Label decider} in \cref{fig:specification-ingenuity-vision-processing-2}), which is more accurate than the simplification made in the PolyGraph specification. Indeed, the selection algorithm inside the \emph{Label decider} may depend on runtime data, which is impossible to specify in the PolyGraph specification.

\begin{figure}
  \centering
  \begin{subfigure}{\columnwidth}
    \caption{A PolyGraph specification of the vision processing system of Ingenuity.}
    \label{fig:specification-ingenuity-vision-processing-1}
    \resizebox{\columnwidth}{!}{
      \begin{tikzpicture}
        \node[actor] (cam) at (0,0) {Camera \\ \emph{30 Hz}};
        \node[actor] (fd) at (0,-2) {Feature \\ detection};
        \node[virtualactor] (spl) at (2,0) {Splitter};
        \node[actor] (pl) at (2,-2) {Pseudo \\ landmarks};
        \node[actor] (ft) at (5,0) {Feature \\ tracking};
        \node[virtualactor] (join) at (5,-2) {Joiner};
        \node[actor] (or) at (7,0) {Filtering \\ procedure};
        \node[actor] (fm) at (7,-2) {Feature \\ match \\ \emph{30 Hz}};
        \draw[arc] (cam) to (fd);
        \draw[arc] (fd) to (spl);
        \draw[arc] (spl) to node[very near start, right] {1/10} node[midway, right] {[9/10]} (pl);
        \draw[arc] (spl) to node[near start, above] {9/10} (ft);
        \draw[arc] (pl) to node[near end, below] {1/10} (join);
        \draw[arc] (ft) to (or);
        \draw[arc] (or) to node[midway, right] {[9/10]} node[near end, left] {9/10} (join);
        \draw[arc] (join) to (fm);
        \draw[arc] (ft.60) to[out=110,in=70] node[midway,above] {[1]} (ft.130);
      \end{tikzpicture}
    }
  \end{subfigure}
  \begin{subfigure}{\columnwidth}
    \caption{An RMDF specification of the vision processing system of Ingenuity.}
    \label{fig:specification-ingenuity-vision-processing-2}
    \resizebox{\columnwidth}{!}{
      \begin{tikzpicture}
        \node[actor] (cam) at (-1,0) {Camera \\ \emph{30 Hz}};
        \node[actor] (fd) at (-1,-3) {Feature \\ detection};
        \node[virtualactor] (dup) at (-1,-6) {Duplicater 1};
        \node[virtualactor] (spl) at (2,0) {Controlled \\ splitter};
        \node[actor] (pl) at (2,-3) {Pseudo \\ landmarks};
        \node[decider] (lab) at (2,-6) {Label \\ decider};
        \node[virtualactor] (dup2) at (5.5,-6) {Duplicater 2};
        \node[actor] (ft) at (5.5,0) {Feature \\ tracking};
        \node[virtualactor] (join) at (5.5,-3) {Controlled \\ joiner};
        \node[actor] (or) at (8,0) {Filtering \\ procedure};
        \node[actor] (fm) at (8,-3) {Feature \\ match \\ \emph{30 Hz}};
        \draw[arc] (cam) to (fd);
        \draw[arc] (fd) to (dup);
        \draw[arc] (dup) to (lab);
        \draw[arc,dashed] (dup2) .. controls (0,-3.5) .. (spl);
        \draw[arc,dashed] (dup2) to (join);
        \draw[arc,dashed] (lab) to (dup2);
        \draw[arc] (spl) to node[very near start, right] {$m_1$} (pl);
        \draw[arc] (spl) to node[near start, above] {$m_2$} (ft);
        \draw[arc] (pl) to node[near end, below] {$m_1$} (join);
        \draw[arc] (ft) to (or);
        \draw[arc] (or) to node[very near end, above, sloped] {$m_2$} (join);
        \draw[arc] (join) to (fm);
        \draw[arc] (ft.60) to[out=110,in=70] node[midway,above] {[1]} (ft.130);
        \draw[arc] (dup.30) to[out=60,in=240] (spl.200);
      \end{tikzpicture}
    }
  \end{subfigure}
  \begin{subfigure}{\columnwidth}
    \caption{A modified RMDF specification of the vision processing system of Ingenuity to have more interesting result for the timing constraints propagation.}
    \label{fig:specification-ingenuity-vision-processing-3}
    \resizebox{\columnwidth}{!}{
      \begin{tikzpicture}
        \node[actor] (cam) at (-1,0) {Camera \\ \emph{30 Hz}};
        \node[actor] (fd) at (-1,-3) {Feature \\ detection};
        \node[virtualactor] (dup) at (-1,-6) {Duplicater 1};
        \node[virtualactor] (spl) at (2,0) {Controlled \\ splitter};
        \node[actor] (pl) at (2,-3) {Pseudo \\ landmarks};
        \node[decider] (lab) at (2,-6) {Label \\ decider};
        \node[virtualactor] (dup2) at (5.5,-6) {Duplicater 2};
        \node[actor] (ft) at (5.5,0) {Feature \\ tracking};
        \node[virtualactor] (join) at (5.5,-3) {Controlled \\ joiner};
        \node[actor] (or) at (8,0) {Filtering \\ procedure};
        \node[actor] (fm) at (8,-3) {Feature \\ match};
        \node[actor] (mot) at (8,-6) {Motors \\ \emph{500 Hz}};
        \draw[arc] (cam) to (fd);
        \draw[arc] (fd) to (dup);
        \draw[arc] (dup) to (lab);
        \draw[arc,dashed] (dup2) .. controls (0,-3.5) .. (spl);
        \draw[arc,dashed] (dup2) to (join);
        \draw[arc,dashed] (lab) to (dup2);
        \draw[arc] (spl) to node[very near start, right] {$m_1$} (pl);
        \draw[arc] (spl) to node[near start, above] {$m_2$} (ft);
        \draw[arc] (pl) to node[near end, below] {$m_1$} (join);
        \draw[arc] (ft) to (or);
        \draw[arc] (or) to node[very near end, above, sloped] {$m_2$} (join);
        \draw[arc] (join) to (fm);
        \draw[arc] (ft.60) to[out=110,in=70] node[midway,above] {[1]} (ft.130);
        \draw[arc] (dup.30) to[out=60,in=240] (spl.200);
        \draw[arc] (fm) to node[midway,left] {[1/50]} node[very near end,left] {3/50} (mot);
      \end{tikzpicture}
    }
  \end{subfigure}
  \caption{Specifications of the Ingenuity vision processing system with PolyGraph and RMDF. All those specifications are consistent and live. Rates of 1 and initial tokens of 0 are omitted for clarity.}
  \label{fig:specification-ingenuity-vision-processing}
\end{figure}

\subsection{Lazy Evaluation Propagation of Timing Constraints}

The timing propagation presented in~\cite{roumage_static_2024} and reminded in the previous section applies to an RMDF specification. Let us apply it to a slightly modified RMDF specification of the Ingenuity vision processing system of \cref{fig:specification-ingenuity-vision-processing-2}. This modification, presented in \cref{fig:specification-ingenuity-vision-processing-3}, is the following: 1) remove the frequency of \emph{Feature Match}, 2) add an actor \emph{Motors} with a frequency of 500 Hz and 3) add a channel between \emph{Feature Match} and \emph{Motors} with a production rate of 1, a consumption rate of 3/50 and 1/50 initial token. This modification resulted in a more interesting outcome for the timing propagation, as there is a fractional rate (the non-modified specification only has rates of 1). Note that the actor \emph{Motors} specified the rest of the Ingenuity system, especially the motors triggering the helicopter's movements at a frequency of 500 Hz, as defined by the textual specification~\cite{grip_flight_2019}. The results of the propagation of timing constraints is presented in \cref{tab:timing-propagation}, in which we assumed that BCETs and WCETs of all actors are 0.12 and 0.20 ms, respectively.

The results of \cref{thm:execution-windows-polygraph} and its extension to RMDF allow a timing constraints propagation with a lazy evaluation. Specifically, arithmetic formulas enable us to compute directly the job of interest, e.g., the successor job that consumes the token produced by an actor. As a result, only the necessary and sufficient computations are performed to derive the timing constraints. This lazy evaluation is particularly useful when timing constraints are computed for a subset of actors or even a subset of jobs.

\begin{table*}
  \caption{Releases, deadlines and execution windows length of the $n$-th job for each actor of the Ingenuity vision processing system, assuming BCET and WCET of all actors are 0.12 and 0.20 ms, respectively, except for the \emph{Controlled Splitter} and \emph{Controlled Joiner} which have constant execution time of 0 ms.}
  \centering
  \label{tab:timing-propagation}
  \begin{tabular}{cccc}
    \toprule
    \multirow{2}{*}{\textbf{Actor}} & \multicolumn{3}{c}{\textbf{Timing constraint of the $n$-th job}} \\
    & \textbf{Release} & \textbf{Deadline} & \textbf{Execution window} \\
    \midrule
    Camera & $\begin{cases} 100 \cdot \lfloor (n-1) / 3 \rfloor & \text{ if } (n-1) \mod{3} = 0 \\ 100/3 + 100 \cdot \lfloor (n-1) / 3 \rfloor & \text{ if } (n-1) \mod{3} = 1 \\ 200/3 + 100 \cdot \lfloor (n-1) / 3 \rfloor & \text{ if } (n-1) \mod{3} = 2 \end{cases}$ & $\begin{cases} 7/5 + 100 \cdot \lfloor (n-1) / 3 \rfloor \\ 177/5 + 100 \cdot \lfloor (n-1) / 3 \rfloor \\ 337/5 + 100 \cdot \lfloor (n-1) / 3 \rfloor \end{cases}$ & $\begin{cases} 7/5 = 1.4 & \text{ ms} \\ 31/15 \simeq 2.1 & \text{ ms} \\ 11/15 \simeq 0.73 & \text{ ms} \end{cases}$ \\
    \midrule
    Feature Detection & $\begin{cases} 3/25 + 100 \cdot \lfloor (n-1) / 3 \rfloor & \text{ if } (n-1) \mod{3} = 0 \\ 2509/75 + 100 \cdot \lfloor (n-1) / 3 \rfloor & \text{ if } (n-1) \mod{3} = 1 \\ 5009/75 + 100 \cdot \lfloor (n-1) / 3 \rfloor & \text{ if } (n-1) \mod{3} = 2 \end{cases}$ & $\begin{cases} 8/5 + 100 \cdot \lfloor (n-1) / 3 \rfloor \\ 178/5 + 100 \cdot \lfloor (n-1) / 3 \rfloor \\ 338/5 + 100 \cdot \lfloor (n-1) / 3 \rfloor \end{cases}$ & $\begin{cases} 37/25 = 1.48 & \text{ ms} \\ 161/75 \simeq 2.1 & \text{ ms} \\ 61/75 \simeq 0.81 & \text{ ms} \end{cases}$ \\
    \midrule
    Label Decider & $\begin{cases} 6/25 + 100 \cdot \lfloor (n-1) / 3 \rfloor & \text{ if } (n-1) \mod{3} = 0 \\ 2518/75 + 100 \cdot \lfloor (n-1) / 3 \rfloor & \text{ if } (n-1) \mod{3} = 1 \\ 5018/75 + 100 \cdot \lfloor (n-1) / 3 \rfloor & \text{ if } (n-1) \mod{3} = 2 \end{cases}$ & $\begin{cases} 9/5 + 100 \cdot \lfloor (n-1) / 3 \rfloor \\ 179/5 + 100 \cdot \lfloor (n-1) / 3 \rfloor \\ 339/5 + 100 \cdot \lfloor (n-1) / 3 \rfloor \end{cases}$ & $\begin{cases} 39/25 = 1.56 & \text{ ms} \\ 167/75 \simeq 2.2 & \text{ ms} \\ 67/75 \simeq 0.87 & \text{ ms} \end{cases}$ \\
    \midrule
    Controlled Splitter & $\begin{cases} 9/25 + 100 \cdot \lfloor (n-1) / 3 \rfloor & \text{ if } (n-1) \mod{3} = 0 \\ 2527/75 + 100 \cdot \lfloor (n-1) / 3 \rfloor & \text{ if } (n-1) \mod{3} = 1 \\ 5027/75 + 100 \cdot \lfloor (n-1) / 3 \rfloor & \text{ if } (n-1) \mod{3} = 2 \end{cases}$ & $\begin{cases} 2 + 100 \cdot \lfloor (n-1) / 3 \rfloor \\ 36 + 100 \cdot \lfloor (n-1) / 3 \rfloor \\ 68 + 100 \cdot \lfloor (n-1) / 3 \rfloor \end{cases}$ & $\begin{cases} 41/25 = 1.64 & \text{ ms} \\ 173/75 \simeq 2.3 & \text{ ms} \\ 73/75 \simeq 0.97 & \text{ ms} \end{cases}$ \\
    \midrule
    Feature Tracking & $\begin{cases} 12/25 + 100 \cdot \lfloor (n-1) / 3 \rfloor & \text{ if } (n-1) \mod{3} = 0 \\ 2536/75 + 100 \cdot \lfloor (n-1) / 3 \rfloor & \text{ if } (n-1) \mod{3} = 1 \\ 5036/75 + 100 \cdot \lfloor (n-1) / 3 \rfloor & \text{ if } (n-1) \mod{3} = 2 \end{cases}$ & $\begin{cases} 11/5 + 100 \cdot \lfloor (n-1) / 3 \rfloor \\ 181/5 + 100 \cdot \lfloor (n-1) / 3 \rfloor \\ 341/5 + 100 \cdot \lfloor (n-1) / 3 \rfloor \end{cases}$ & $\begin{cases} 43/25 = 1.72 & \text{ ms} \\ 179/75 \simeq 2.4 & \text{ ms} \\ 79/75 \simeq 1.0 & \text{ ms} \end{cases}$ \\
    \midrule
    Filtring Procedure & $\begin{cases} 3/5 + 100 \cdot \lfloor (n-1) / 3 \rfloor & \text{ if } (n-1) \mod{3} = 0 \\ 509/15 + 100 \cdot \lfloor (n-1) / 3 \rfloor & \text{ if } (n-1) \mod{3} = 1 \\ 1009/15 + 100 \cdot \lfloor (n-1) / 3 \rfloor & \text{ if } (n-1) \mod{3} = 2 \end{cases}$ & $\begin{cases} 12/5 + 100 \cdot \lfloor (n-1) / 3 \rfloor \\ 182/5 + 100 \cdot \lfloor (n-1) / 3 \rfloor \\ 342/5 + 100 \cdot \lfloor (n-1) / 3 \rfloor \end{cases}$ & $\begin{cases} 9/5 = 1.8 & \text{ ms} \\ 37/15 \simeq 2.5 & \text{ ms} \\ 17/15 \simeq 1.1 & \text{ ms} \end{cases}$ \\
    \midrule
    Pseudo Landmarks & $\begin{cases} 12/25 + 100 \cdot \lfloor (n-1) / 3 \rfloor & \text{ if } (n-1) \mod{3} = 0 \\ 2536/75 + 100 \cdot \lfloor (n-1) / 3 \rfloor & \text{ if } (n-1) \mod{3} = 1 \\ 5036/75 + 100 \cdot \lfloor (n-1) / 3 \rfloor & \text{ if } (n-1) \mod{3} = 2 \end{cases}$ & $\begin{cases} 12/5 + 100 \cdot \lfloor (n-1) / 3 \rfloor \\ 182/5 + 100 \cdot \lfloor (n-1) / 3 \rfloor \\ 342/5 + 100 \cdot \lfloor (n-1) / 3 \rfloor \end{cases}$ & $\begin{cases} 48/25 = 1.92 & \text{ ms} \\ 194/75 \simeq 2.6 & \text{ ms} \\ 94/75 \simeq 1.2 & \text{ ms} \end{cases}$ \\
    \midrule
    Controlled Splitter & $\begin{cases} 18/25 + 100 \cdot \lfloor (n-1) / 3 \rfloor & \text{ if } (n-1) \mod{3} = 0 \\ 2554/75 + 100 \cdot \lfloor (n-1) / 3 \rfloor & \text{ if } (n-1) \mod{3} = 1 \\ 5054/75 + 100 \cdot \lfloor (n-1) / 3 \rfloor & \text{ if } (n-1) \mod{3} = 2 \end{cases}$ & $\begin{cases} 13/5 + 100 \cdot \lfloor (n-1) / 3 \rfloor \\ 183/5 + 100 \cdot \lfloor (n-1) / 3 \rfloor \\ 343/5 + 100 \cdot \lfloor (n-1) / 3 \rfloor \end{cases}$ & $\begin{cases} 47/25 = 1.88 & \text{ ms} \\ 191/75 \simeq 2.5 & \text{ ms} \\ 91/75 \simeq 1.2 & \text{ ms} \end{cases}$ \\
    \midrule
    Feature Match & $\begin{cases} 21/25 + 100 \cdot \lfloor (n-1) / 3 \rfloor & \text{ if } (n-1) \mod{3} = 0 \\ 2563/75 + 100 \cdot \lfloor (n-1) / 3 \rfloor & \text{ if } (n-1) \mod{3} = 1 \\ 5063/75 + 100 \cdot \lfloor (n-1) / 3 \rfloor & \text{ if } (n-1) \mod{3} = 2 \end{cases}$ & $\begin{cases} 14/5 + 100 \cdot \lfloor (n-1) / 3 \rfloor \\ 184/5 + 100 \cdot \lfloor (n-1) / 3 \rfloor \\ 344/5 + 100 \cdot \lfloor (n-1) / 3 \rfloor \end{cases}$ & $\begin{cases} 49/25 = 1.96 & \text{ ms} \\ 197/75 \simeq 2.6 & \text{ ms} \\ 97/75 \simeq 1.3 & \text{ ms} \end{cases}$ \\
    \midrule
    Motors & $1 + 2 \cdot (n-1)$ & $1 + 2 \cdot n$ & $2 \text{ ms}$ \\
    \bottomrule
  \end{tabular}
\end{table*}

\subsection{A Feasibility Test}

To put into context the feasibility test presented in section \cref{sec:feasibility-test}, the result of \cref{tab:timing-propagation} can be interpreted as follows. It is necessary that the WCET of all actors of Ingenuity are less than the minimum execution window length presented in column \emph{Execution window} of \cref{tab:timing-propagation} for Ingenuity to be feasible. Those values are summarized in \cref{tab:wcet}.

A schedulability test has been proposed for the PolyGraph model in~\cite{hamelin_performance_2024}. While in this paper we do not consider a scheduling policy, the authors of~\cite{hamelin_performance_2024} consider a preemptive context and interferences between actors. A response time analysis is performed to determine the worst-case response time of actors, considering the worst-case execution time of actors and the interferences between actors. Another difference is the timing evaluation procedure. While our approach is a lazy evaluation, the approach of~\cite{hamelin_performance_2024} recursively updates the response time of actors until a fixed point is reached. Both approaches achieve the same result, and the lazy evaluation requires less computation if all timing constraints are unnecessary.

\begin{table}[htbp]
  \caption{Maximum WCETs of the actors of the Ingenuity vision processing system that constitute a necessary condition for its feasibility.}
  \centering
  \label{tab:wcet}
  \begin{tabular}{|cc|cc|}
    \toprule
    \textbf{Actors} & \textbf{WCET} & \textbf{Actors} & \textbf{WCET} \\
    \midrule
    Camera & 0.73 ms & Filtering Procedure & 1.1 ms \\
    Feature Detection & 0.81 ms & Pseudo Landmarks & 1.2 ms \\
    Label Decider & 0.87 ms & Controlled Splitter & 1.2 ms \\
    Controlled Splitter & 1.2 ms & Feature Match & 1.3 ms \\
    Feature Tracking & 1.0 ms & Motors & 2 ms \\
    \bottomrule
  \end{tabular}
\end{table}

\section{The Mode Change Protocol of RMDF}

\label{sec:mode-change-protocol}

Mode-oriented execution relies on \emph{Mode Change Protocols} (MCP)~\cite{fort_synchronous_2022}. A \emph{Mode Change Protocol} defines how the transition from one mode to another is handled. A \emph{Mode Change Request} (MCR) is an event that triggers a mode change during a \emph{transition phase}.

In the context of RMDF, an MCR is the production of a control token by a mode decider. Following the criteria presented in~\cite{fort_synchronous_2022}, an MCP is classified according to three criteria: \emph{overlaping}, \emph{periodicity}, and \emph{retirement}. In our context, those criteria are defined as follows:

\begin{itemize}
  \item An \emph{overlapping} protocol allows actors from two distinct execution branches to be executed at the same time.
  \item A \emph{periodic} protocol allows actors outside conditional execution branches to not be interrupted by the mode change.
  \item A \emph{late-retirement} protocol allows actors from an old mode, i.e., a previous execution branch, to continue their execution for a given amount of time.
\end{itemize}

The MCP proposed in this paper for RMDF is non-overlapping, periodic, and late-retirement. In order to explain the non-overlapping property, let us remind restriction 5 of control areas (cf. \cref{prop:restrictions-control-areas}). This restriction states that whenever parametric rates of a control area are assigned a value, their sum is equal to 1. In conjunction with restriction 4, which states that production and consumption rates of a control area are equal to 0 or 1, it results in a single branch being executed at a time, resulting in the non-overlapping property. The proposed MCP is periodic as a mode change only affects the execution of actors in a control area. Finally, the late-retirement is ensured by the controlled joiner, which prevents the consumption of the token produced by the second branch until the first branch terminates. In other words, actors are allowed to finish their execution without compromising the expected order of tokens.

\section{Discussions and Related Works}

\label{sec:related-works}

\subsection{Comparison of RMDF with \textsc{Lustre} and \textsc{Prelude}}

Synchronous languages such as \textsc{Lustre}~\cite{halbwachs_synchronous_1991} and \textsc{Prelude}~\cite{forget_multiperiodic_2008} share the same goal as RMDF, that is to specify and analyze real-time systems. However, differences remain; the state of the art of \textsc{Lustre} and \textsc{Prelude} from~\cite{fort_programing_2022} and~\cite{forget_programming_2023} are used in the following paragraphs to explain those differences. An RMDF specification is composed of \emph{actors}, which communicate by exchanging data tokens on \emph{channels}, and a system specified with \textsc{Lustre} is composed of \emph{nodes}. Variables in nodes represent an infinite sequence of values, i.e., a \emph{dataflow} or, more simply, a \emph{flow}. Nodes are composed of \emph{equations}, which can be understood as mathematical equations: each equations define an output flow from input flow(s).

\textsc{Lustre} is structured around \emph{clocks}: each dataflow has a clock and is present only when the clock is present, i.e., \enquote{it produces a tick}. A static analysis named \emph{clock calculus} verifies the consistency of clocks, i.e., all combinations of flows can be evaluated without synchronization mechanisms such as buffering. This is different from RMDF (and dataflow models in general). In RMDF, produced data tokens are always present until they are consumed.

\textsc{Lustre} cannot directly specify real-time constraints, e.g., periods and deadlines. To overcome that, \textsc{Prelude} extends \textsc{Lustre} to allow system designers to explicitly define the period of a flow (i.e., the period of its clock) or let the compiler infer it. As an illustration, yet \emph{without proving any semantic equivalence}, the same logic in \textsc{RMDF} would be to specify the period of production and consumption of data tokens instead of the period of the actors.

A system specified with \textsc{Prelude} can then be composed of flows with different periods; those flows are not \emph{synchronous} in the sense of~\cite{halbwachs_synchronous_1991}. In order to combine non-synchronous flows, \emph{rate-transition operators} allow the production of a flow that is faster or slower than the input flow. As an illustration, yet again \emph{without proving any semantic equivalence}, the rational rates \textsc{RMDF} greater/lower than 1 allow for specifying faster/slower data tokens production and consumption.

\subsection{Comparing \textsc{RMDF} with an Extended \textsc{Prelude}}

The static analysis of the proposed RMDF relies on restrictions applied to control areas (cf. \cref{prop:restrictions-control-areas} for the exhaustive list of restrictions). Among those restrictions, the third asserts that \enquote{if there are timed actors in a control area, they must have the same frequency} and the fourth asserts that \enquote{the production rate inside a control area and the parametric rates associated to the controlled joiner and controlled splitter of that control area must be equal to one}.

\textsc{Lustre} has been extended in~\cite{colaço_conservative_2005} to allow the specification of multi-mode system. A switch-like statement allows to switch between sets of equations according to some enumerated value and an automaton construction. In addition, in the work of~\cite{colaço_conservative_2005}, all flows within a state must have the same period. The mode-dependent systems considered in this paper have a similar behavior because of the third and fourth restrictions presented in \cref{prop:restrictions-control-areas}.

The gap between and \textsc{Prelude} and~\cite{colaço_conservative_2005} is addressed in~\cite{fort_synchronous_2022}. The best of both works is used to support the specification of states with multi-periodic states in \textsc{Prelude}. The work of~\cite{fort_synchronous_2022} addressed the third and fourth restrictions of \cref{prop:restrictions-control-areas}. However, this work is in the context of synchronous languages, and work remains to be done to know if the same logic can be applied to RMDF and dataflow models.

\subsection{Dataflow Models with Conditional Execution}

Dynamic reconfiguration has been introduced in the dynamic PolyGraph model~\cite{dubrulle_dynamic_2019}. In this reference, a set of actors can select an execution mode. In addition to its data, each token is labeled with an execution mode. When an actor consumes tokens, it reads the mode of the token and decides how to process the data. Note that this processing occurs as a black box inside the actor's implementation. The processing of input tokens can also include discarding input tokens. A key advantage of this procedure is that the consistency and liveness checking algorithms of PolyGraph can be directly extended to dynamic PolyGraph. Mode execution is transmitted transitively from actor to actor, meaning all actors of a dynamic PolyGraph model must have a specific implementation to process execution mode. In the RMDF model, the execution modes are carried only from mode deciders to controlled splitters and controlled joiners. In other words, actors in a control area of an RMDF specification do not have a specific implementation that considers the execution mode. Therefore, RMDF allows for the easier integration of actors inside a control area. Our approach puts the complexity at the model level, while in dynamic PolyGraph, the complexity is at the implementation level.

TPDF~\cite{khanhdo_transaction_2016} is a generalization of CSDF where some actors are \emph{control actors}. Those latter are similar to our mode decider. A control actor outputs a control token, which a subsequent actor consumes to decide how this actor consumes or produces its tokens. For instance, it can select one of the input or output tokens or select the available input token with the highest priority. The expressivity of the different execution modes available is greater in TPDF than in RMDF. The key advantage of RMDF is that it is easier to specify data communication patterns and timing constraints than TPDF.

\section{Conclusion and Future Works}

\label{sec:conclusion-future-works}

We have presented the RMDF model, a dataflow model that extends the PolyGraph model. This latter is well-suited to specify CPSs with relaxed real-time constraints. Besides keeping this advantage, RMDF allows for the specification of CPSs with a mode-dependent execution, i.e., with a set of conditional execution branches. In such CPSs, some actors can select from/to which input/output channel tokens are consumed/produced, and this choice may depend on runtime data. We have extended the static analysis algorithm of the PolyGraph model to the RMDF model to verify the consistency and liveness of an RMDF specification. The timing behavior analysis of the PolyGraph model presented in~\cite{roumage_static_2024} is also extended to RMDF, and we presented a feasibility test. Static analyses of the RMDF model have also been implemented.

A dataflow model is a trade-off between expressiveness and analyzability. The expressiveness of RMDF could be improved by relaxing restrictions imposed on control areas, while the analyzability could be enhanced by incorporating mode dependencies within an RMDF specification. This approach is reasonable, as execution modes in real systems do not occur arbitrarily. More generally, it would be valuable to explore how the feasibility test proposed for RMDF can be applied to other design models beyond dataflow models, such as MatLab/Simulink. To that end, it is necessary to define semantically equivalent model transformations from those models to RMDF.

\bibliographystyle{IEEEtran}

\end{document}